\documentclass[preprint,12pt]{elsarticle}

\usepackage{amsmath}
\usepackage{amsfonts}
\usepackage{amssymb}
\usepackage{amscd}
\usepackage{amsthm}
\usepackage{mathtools}

\usepackage{bbm}
\usepackage{bbding}
\usepackage{mathrsfs}
\usepackage{pifont}
\usepackage{stmaryrd}
\usepackage{txfonts}

\usepackage{array}
\usepackage{booktabs}
\usepackage{diagbox}
\usepackage{threeparttable}

\usepackage{graphicx}
\usepackage{float}
\usepackage{stfloats}
\usepackage{subfig}         
\usepackage{pgfplots}

\usepackage{textcomp}
\usepackage{url}
\usepackage{verbatim}       
\usepackage{color}
\usepackage[left,pagewise]{lineno}
\usepackage[justification=centering]{caption} 

\usepackage{algorithmic}
\usepackage{algorithm}
\usepackage[backref]{hyperref}

\pgfplotsset{compat=1.17}

\newtheorem{theorem}{Theorem}[section]
\newtheorem{lemma}{Lemma}[section]
\newtheorem{definition}{Definition}[section]
\newtheorem{remark}{Remark}[section]
\newtheorem{corollary}{Corollary}[section]
\newtheorem{example}{Example}[section]

\usepackage{bm}

\journal{}

\begin{document}

\begin{frontmatter}
\title{An Efficient and Perfect Secret Sharing Scheme on a Class of Non-maximal Quantum Access Structure}

\author{Lei Li\corref{cor1}}
\cortext[cor1]{Corresponding author.}
\ead{lilei02@xidian.edu.cn}

\author{Zhi Li}
\ead{zhli@xidian.edu.cn}

\address{School of Mechano-Electronic Engineering,
	Xidian University,
	Xi'an, 710071, China}

\begin{abstract}
It is known that existing quantum secret sharing (QSS) schemes constructed on non-maximal quantum access structures (QAS) generally suffer from excessive quantum resource overhead and fail to achieve a favorable trade-off between security and operational efficiency. However, dedicated construction strategies for high-efficiency QSS schemes adaptive to non-maximal QAS have not been fully explored so far. This paper explores construction methods for efficient QSS schemes based non-maximal QAS. We thoroughly analyze three-hyperedge hyperstar QAS and design a universal, efficient, perfect entangled-state-based QSS scheme.
We first characterize the forbidden and intermediate sets of non-maximal QAS, then derive the necessary and sufficient condition for implementing such QAS with pure-state encoded QSS. Next, we define representative-element QAS as a substructure of three-hyperedge hyperstar QAS. Relying on the matching classical secret sharing scheme for auxiliary support, we construct a complete QSS protocol for the hyperstar QAS. By deploying lightweight quantum resources to representative element access structures, the proposed scheme reduces the difficulty in quantum state preparation and distribution, and enhances its security and resource utilization efficiency.

\end{abstract}
\begin{keyword}
Perfect QSS. Non-maximal QAS. Hyperstar. Efficiency.
\end{keyword}

\end{frontmatter}

\section{Introduction}
Quantum secret sharing (QSS) is an important branch of quantum cryptography. In 1999, Hillery et al. \cite{ref1} proposed the first QSS scheme using Greenberger-Horne-Zeilinger (GHZ) triplet states, whose basic idea was to split an unknown quantum state among multiple participants, who must cooperate to reconstruct the original secret information. In the same year, Cleve et al. \cite{ref2} presented a threshold QSS scheme based on quantum error-correcting codes. Since then, numerous QSS schemes have been successively proposed \cite{ref3,ref4}. As a core foundation for multi-party secure quantum communication and distributed quantum computing, QSS leverages quantum mechanics to implement multi-party secret sharing and establishes a novel framework for information security \cite{ref5}. In QSS schemes, the shared secret falls into two categories: classical information encoded onto quantum states \cite{ref1}, or quantum states themselves \cite{ref2}. This paper focuses on QSS schemes where quantum states act as the shared secret.

In QSS schemes involving shared quantum states, the threshold secret sharing scheme is particularly crucial and is typically denoted by $((k, n))$. Here the set of participants is $P=\{P_1,P_2,\cdots,P_n\}$, and any subset consisting of no fewer than $k$ participants in $P$ forms an authorized set. In 1999, Cleve et al. \cite{ref2} first defined threshold schemes for sharing quantum states and proposed a general construction method based on quantum error correction codes, and they precisely delineated the applicable scope of quantum threshold secret-sharing schemes: when $n<2k$, there must exist a corresponding $((k, n))$ threshold scheme (\cite{ref2}, Theorem 5); This further proves that the strong constraint of the pure-state quantum $((k, n))$ threshold scheme must satisfy $n=2k-1$ (\cite{ref2}, Corollary 9).
In 2005, Imai et al.\cite{ref6} proposed a perfect quantum secret sharing (PQSS) scheme: a group of participants shares an unknown quantum state, and only an authorized subset of participants can recover the original secret; participants in unauthorized subsets cannot obtain any information about the secret. i.e. $2^P = \Gamma \cup \Gamma_F$, where $2^P$ denotes the collection of all subsets of $P$, $\Gamma$ denotes the access structure i.e., the family of all authorized subsets, and $\Gamma_F$ represents the family of forbidden sets,  namely all unauthorized subsets. For a QSS scheme, if the authorized sets and unauthorized sets are mutually complementary, the corresponding access structure is referred to as a \textbf{maximal QAS}\cite{ref7}. It is easy to verify that the $((k, n))$ threshold scheme with $n = 2k-1$ is a typical PQSS scheme, in which the unauthorized set and the authorized set are complementary to each other (\cite{ref2}, Corollary 8).
The advantage of the PQSS scheme is its strong security: no unauthorized subset can obtain any information about the secret. However, it has the following drawback: a large class of QAS cannot be realized by PQSS schemes. For instance, it was proved in \cite{ref2} that when $k<n<2k-1$, no $(k, n)$ threshold scheme can be realized by a PQSS scheme. A ramp QSS scheme was introduced in \cite{ref8} for this type of threshold access structure, which divides the set formed by unauthorized subsets into two categories: forbidden set family $\Gamma_F$  and intermediate set family $\Gamma_I$. In essence, there are a lot of non-threshold QAS which also carry intermediate set \cite{ref9,ref10}. We define an access structure containing intermediate set family as a \textbf{non-maximal QAS}, i.e., $2^P = \Gamma \cup \Gamma_F \cup \Gamma_I$. All participants of the forbidden set cannot reconstruct any information about the secret, whereas all participants of intermediate set generally cannot fully recover the secret but may obtain partial information about it.

QAS constitutes the core theoretical framework of QSS schemes, which directly determine the authority division of participants and secret reconstruction rules. Based on their structural characteristics, they can be categorized into two types: maximal QAS and non-maximal QAS. Among them, non-maximal QAS feature intermediate sets that differ from authorized sets and forbidden sets, leading to more intricate structural mechanisms and stricter security constraints. Meanwhile, they are better suited for complex practical quantum network scenarios and have thus become a prominent research hot spot in the field of QSS in recent years.

At present, research on non-maximal QAS is primarily focused on two core directions, and has produced a series of fundamental results. The first direction concerns the property analysis and precise characterization of intermediate sets within non-maximal QAS. Within this research branch, Gheorghiu et al. \cite{ref11} constructed a hybrid QSS scheme based on non-binary stabilizer codes in 2012, establishing a systematic method to judge authorized sets, forbidden sets and intermediate sets in non-maximal structures. To address the potential risk of information leakage induced by intermediate sets, Zhang et al. [12] proposed quantum strongly secure ramp QSS in 2015, which avoids the theft of core quantum information via intermediate sets by imposing rigorous security constraints. Matsumoto et al. \cite{ref13} continuously refined the structural judgment theory: they constructed ramp QSS based on nested linear code systems in 2018 and clarified the judgment criteria for authorized and unauthorized sets. Later in 2020 \cite{ref14}, they further derived the necessary and sufficient conditions for authorized sets and forbidden sets using stabilizer subspaces, realized quantitative calculation of the amount of information obtainable by intermediate sets, and supplemented and perfected the fundamental theoretical framework of non-maximal access structures. The second direction lies in the construction and optimization of efficient QSS schemes under non-maximal QAS. In 1999, Cleve \cite{ref2} proved the efficiency lower bound of the conventional $((k,n))$ quantum threshold secret sharing scheme, clarified the constraint relationship between the qubit overhead of shares and the size of the secret, laid the theoretical foundation for this research branch, and also pointed out the critical drawback of excessive communication overhead during the secret reconstruction phase in traditional schemes. Current  research on efficiency optimization falls into two major technical lines. The first approach constructs PQSS schemes by optimizing and refining QAS. Anderson et al. \cite{ref15} realized optimized adaptation for non-maximal QAS by reconstructing optimally restricted quantum access substructures. Reference \cite{ref16} combines imperfect ramp secret sharing with classical encryption. an overall PQSS scheme can be obtained. Reference \cite{ref17} adopts quantum one-time pads and classical secret sharing to split each participant’s share into quantum and classical parts, thereby reducing the number of quantum shares. Nevertheless, this work does not directly improve the combinatorial structure of QAS.
The second technical line focuses on boosting communication efficiency in the secret reconstruction phase. Building upon the work of Cleve et al. \cite{ref2}, Senthoor et al. constructed several types of quantum threshold secret sharing schemes with high communication efficiency for secret reconstruction. By combining ramp schemes and standard threshold schemes, these schemes allow flexible selection of an arbitrary number of participants when the number of available participants exceeds the threshold according to practical requirements, which improves the communication efficiency of secret recovery \cite{ref18,ref19,ref20}. References \cite{ref21,ref22} established communication-efficient QSS schemes based on extended CSS codes.

A review of existing literature reveals that most research achievements on non-maximal QAS focus on ramp threshold access structures, for which the theoretical characterization and scheme construction systems have become relatively mature. Nevertheless, existing studies still suffer from obvious deficiencies in scenario adaptability as well as theoretical gaps. Practical quantum networks are constrained by transmission distance, channel loss, terrain environment and hardware performance, resulting in asymmetric node connections. Conventional symmetric ramp threshold access structures can hardly adapt to real network scenarios. In comparison, star-type and hyperstar quantum access structures enable direct encoding and modeling for central network nodes, branch nodes and cooperative groups. Such structures can precisely adapt to asymmetric quantum network architectures and possess higher practical engineering value. Chen et al. \cite{ref23} first constructed star-type cluster states, which have been utilized to generate two-dimensional and three-dimensional cluster states \cite{ref23}, topological one-way computation \cite{ref24, ref25}, and QSS\cite{ref26}. A hyperstar structure can directly encode central nodes, branch nodes and cooperative groups within a network into authorized hyperedges, rendering it more practical for quantum networking applications. Reference [27] investigated a restricted hyperstar access structure yet failed to conduct systematic research on hyperstar structures from a unified perspective. Reference [28] constructed multipartite quantum states and quantum information masking schemes via orthogonal arrays to realize a special class of hyperstar access structures integrated with two-way identity authentication. Reference [29] classified all hyperstar with three hyperedges QAS into four isomorphism classes, calculated the optimal information rates of their corresponding classical secret sharing schemes, and further proposed efficient single-qubit schemes for these structures. In summary, existing research on hyperstar access structures remains fragmented and lacks systematic structural analysis and theoretical characterization. More importantly, hyperstar with three hyperedges QAS  constitute a special category of non-maximal QAS that neither fall into the conventional ramp-type access structures nor can be generated by stabilizer codes. As a result, all existing efficient construction methods tailored for ramp structures and stabilizer code systems cannot be directly applied to them. Distinct research gaps still exist regarding property analysis, security characterization and efficient scheme design for such structures.

To address the aforementioned deficiencies in existing research, this paper investigates the theoretical characterization and scheme construction of non-maximal hyperstar QAS. First, we present complete and quantitative characterizations of forbidden sets and intermediate sets within non-maximal QAS, and establish a theoretical framework that can accurately judge subset attributes and quantify the information features of sets. Relying on this framework, we rigorously verify the existence of intermediate sets in hyperstar quantum access structures with three hyperedges. We further optimize and iterate the original hyperstar access structures and propose a novel sub-access structure composed of representative elements. On this basis, combined with classical secret sharing theory, we design an efficient hybrid QSS scheme tailored to the proposed structure. Although the quantum component of our proposed schemes adopts an imperfect structure, the supporting classical secret sharing scheme satisfies perfectness and achieves the optimal information rate. The overall system can form a PQSS scheme. This work provides a universal approach for constructing efficient and perfect secret sharing technologies under non-maximal hyperstar-type QAS.

The organization of this paper is arranged as follows. Section 2 introduces relevant definitions and existing conclusions of QAS, including non-maximal QAS, optimally restricted QAS and hyperstar access structures. Meanwhile, it presents characterizations of forbidden sets and intermediate sets within non-maximal QAS, and further derives a necessary and sufficient condition for a non-maximal quantum access structure to be realized by a pure-state encoded QSS scheme. Section 3 deduces the relevant properties of quantum entangled states. Section 4 constructs four types of perfect hybrid QSS schemes based on hyperstar QAS and provides correctness proofs for the proposed schemes. Section 5 summarizes all research findings of this paper.
\section{Quantum access structure}
 We use the symbol $P=\{P_1,P_2,\cdots,P_n\}$ to denote a set consisting of $n$ participants. For every $A\subseteq P$, write $ \overline{A}=P\setminus A$. For convenience, we also denote $\{P_1,P_2,\cdots,P_n\}$ as $\{1,2,\cdots,n\}$. Furthermore, we write $\{1,2,\cdots,n\}$ as $[n]$, and write $\{m,m+1,\cdots,n\}$ as $[m,n]$.

\begin{definition}
\cite{ref30} Let $\Gamma$ be an access structure. A set $B\in \Gamma$ is called a minimal authorized subset if for every set $A$ satisfying $A\subseteq B$ and $A\neq B$, we have $A \notin\Gamma$. The set consisting of all minimal authorized subsets is called the minimal access structure, denoted by $\Gamma_0$.
\end{definition}

\subsection{Non-maximal QAS}
When the shared secret is a quantum state, to avoid violating the quantum no-cloning theorem, a given access structure is a QAS if and only if the intersection of any two authorized sets in the structure is nonempty \cite{ref2}. Owing to the special nature of QAS, Ref. \cite{ref31} further partitions, for a given QAS $\Gamma$, the collection of unauthorized subsets $\mathcal{A}=\{A\in2^P|A\notin\Gamma\}$ and lets
	\begin{align*}
		\mathcal{A}_1 &= \{ A \in \mathcal{A} \mid \exists B \in \Gamma, A \cap B = \varnothing \} \\
		\mathcal{A}_2 &= \{ A \in \mathcal{A} \mid \forall B \in \Gamma, A \cap B \neq \varnothing \}
	\end{align*}
Obviously, $\mathcal{A}=\mathcal{A}_1\bigcup\mathcal{A}_2$  and $\mathcal{A}_1\bigcap\mathcal{A}_2=\varnothing$. It is clear that all elements in $\mathcal{A}$ are unauthorized subsets. On the other hand, it follows from the above introduction that $\mathcal{A}=\Gamma_F\bigcup \Gamma_I$, where $\Gamma_F \bigcap \Gamma_I=\varnothing$.

\begin{lemma}\cite{ref31}
	In a pure-state QSS scheme, let $\Gamma\subseteq 2^P$ be a QAS, and $\mathcal{A}=\mathcal{A}_1\bigcup\mathcal{A}_2$  be the set consisting of all unauthorized subsets.

	(1) If $A\in \mathcal{A}_1 $, then $\overline{A}\in\Gamma$;

	(2) If $A\in \mathcal{A}_2$, then $\overline{A}\in \mathcal{A}_2$.
\end{lemma}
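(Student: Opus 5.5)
The plan is to use two standard facts about pure-state QSS schemes. If the secret is encoded by an isometry into a pure state on the shares, then for any bipartition $P=A\cup\overline{A}$ the information-theoretic complementarity of \cite{ref2} holds: $A$ can recover the secret if and only if $\overline{A}$ obtains no information about it. In the pure-state setting this follows from the Schmidt decomposition. Purify the secret with a reference system $R$; then $A$ can correct the erasure of $\overline{A}$ exactly when $\rho_{R\overline{A}}=\rho_R\otimes\rho_{\overline{A}}$. We also use the no-cloning constraint that any two authorized sets intersect. With these two facts, each part reduces to a combinatorial statement about $\Gamma$ and $\mathcal{A}$.

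For part (1), let $A\in\mathcal{A}_1$ and pick $B\in\Gamma$ with $A\cap B=\varnothing$, so $A\subseteq\overline{B}$. Since $B$ is authorized, the complementarity above shows that $\overline{B}$ learns nothing about the secret. By monotonicity, the subset $A$ also learns nothing; discarding shares cannot increase information. Applying complementarity again, now to the bipartition $(A,\overline{A})$, the fact that $A$ carries no information implies that $\overline{A}$ can recover the secret, so $\overline{A}\in\Gamma$. This is the step where purity is essential. In a mixed-state scheme, "$A$ has no information" would not force "$\overline{A}$ recovers", so I will state the decoupling argument explicitly via the Schmidt decomposition of the purified state on $R\,A\,\overline{A}$.

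For part (2), let $A\in\mathcal{A}_2$, so $A\notin\Gamma$ and $A$ meets every authorized set. First, $\overline{A}\notin\Gamma$, because $A\cap\overline{A}=\varnothing$ and $A$ meets every authorized set. Hence $\overline{A}\in\mathcal{A}$. Suppose for contradiction that $\overline{A}\in\mathcal{A}_1$. Then part (1) applied to $\overline{A}$ gives $\overline{\overline{A}}=A\in\Gamma$, which contradicts $A\in\mathcal{A}$. Therefore $\overline{A}\in\mathcal{A}_2$.

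The main obstacle is justifying, inside part (1), that "no information on $A$" implies "$\overline{A}$ is authorized" for pure-state schemes. I will cite the complementarity theorem of Cleve et al.\ \cite{ref2}, namely that for a pure-state scheme a set is authorized iff its complement is unable to gain information. That makes both the first and last steps of part (1) immediate. Part (2) then needs only the set-theoretic argument above.
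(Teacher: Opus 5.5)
Your argument is correct. The paper gives no proof of this lemma; it only quotes it from the literature. The lemma is in fact purely combinatorial, so the quantum-information route is a detour.

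For (1), the condition $A\cap B=\varnothing$ with $B\in\Gamma$ just says $B\subseteq\overline{A}$. Since $\Gamma$ is the family of all authorized sets, it is monotone: a superset of an authorized set can simply discard the extra shares. Hence $\overline{A}\in\Gamma$ immediately.

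For (2), your first step ($\overline{A}\notin\Gamma$) is already combinatorial. Your appeal to part (1) amounts to the same observation run backwards: if some $B\in\Gamma$ had $B\cap\overline{A}=\varnothing$, then $B\subseteq A$, so $A\in\Gamma$, a contradiction.

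Equivalently, $\mathcal{A}_1=\{A\subseteq P: A\notin\Gamma,\ \overline{A}\in\Gamma\}$ and $\mathcal{A}_2=\{A\subseteq P: A\notin\Gamma,\ \overline{A}\notin\Gamma\}$, and the latter is visibly closed under complement. So your remark that purity is ``essential'' in part (1) is wrong as a statement about the lemma. Purity is needed only for your particular implication ``$A$ has no information $\Rightarrow$ $\overline{A}$ recovers''. The conclusion $\overline{A}\in\Gamma$ holds for any QAS, pure or mixed.

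What your detour does buy is a stronger intermediate result. The first half of your part (1) argument runs: $B$ authorized $\Rightarrow$ $\overline{B}$ decoupled from the reference $\Rightarrow$ every $A\subseteq\overline{B}$ decoupled. This shows that every $A\in\mathcal{A}_1$ carries zero information, and that step needs no purity, since exact recovery by $B$ decouples $R$ from everything outside $B$. This is the inclusion $\mathcal{A}_1\subseteq\Gamma_F$.

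The paper reaches that inclusion only later, in Theorem~2.3, by combining this lemma with Theorem~2.1. The complementarity you cite from Cleve et al.\ is the content of the identity $I(R:A)+I(R:\overline{A})=2\log d_K$ in the proof of Theorem~2.1. That identity is where purity genuinely matters; it does not matter for this lemma.
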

	
\begin{lemma}\cite{ref31}
	Let $\Gamma\subseteq 2^P$ be a QAS. If $\mathcal{A}_2\neq \varnothing$, there does not exist a PQSS scheme realizing the QAS $\Gamma$.
\end{lemma}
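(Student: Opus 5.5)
The plan is to argue by contradiction, using the decomposition $2^P=\Gamma\cup\mathcal{A}_1\cup\mathcal{A}_2$ together with the pure/mixed purification argument of Cleve et al. Suppose a PQSS scheme realizes $\Gamma$ while $\mathcal{A}_2\neq\varnothing$, and fix some $A\in\mathcal{A}_2$. Perfectness means that every unauthorized set learns nothing about the secret, so the reduced state on $A$ is independent of the encoded state.

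The first step is to reduce to a pure-state scheme. Any mixed-state scheme can be purified by introducing an extra reference party $R$ who holds the purification. If the original scheme is perfect, then in the purified scheme the sets $A$ and $A\cup\{R\}$ must be handled carefully. I would rather avoid this detour and instead argue directly with the complement.

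The second step applies Lemma (2) above: in the pure-state setting, $A\in\mathcal{A}_2$ implies $\overline{A}\in\mathcal{A}_2$, so $\overline{A}$ is also unauthorized. Perfectness then says both $A$ and $\overline{A}$ obtain no information. For a pure encoding on $A\cup\overline{A}=P$, the information-disturbance tradeoff (the result of Cleve et al.\ that, for a pure global state, $A$ has no information iff $\overline{A}$ can recover the secret) then forces $\overline{A}\in\Gamma$. This contradicts $\overline{A}\in\mathcal{A}_2$.

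For a mixed encoding, I would purify with an auxiliary system $R$ and apply the same argument to the bipartition $A$ versus $\overline{A}\cup R$. This shows $\overline{A}\cup R$ recovers the secret. I would then need to transfer this back to $\overline{A}$.

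The main obstacle is exactly that transfer. Recovery by $\overline{A}\cup R$ does not immediately give recovery by $\overline{A}$. To handle it, I would use the defining property of $\mathcal{A}_2$: $A$ meets every authorized set. Then an authorized $B$ can be chosen, and since $B\cap A\neq\varnothing$, the standard no-cloning argument can be combined with the entropy relation $S(A)+S(\text{secret})$-type accounting used in \cite{ref31}.

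If this route gets too technical, I would simply restrict the lemma to pure-state schemes, as in its context, and rely on Lemma (2) together with the pure-state complementarity.
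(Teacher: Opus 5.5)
The paper gives no proof of this lemma (it is quoted from \cite{ref31}), so your proposal has to stand on its own. Its pure-state branch does. If the encoding is pure and $A\in\mathcal{A}_2$, perfectness makes $A$ forbidden. The pure-state complementarity of Cleve et al.\ then gives $\overline{A}\in\Gamma$; this is the argument of Theorem 2.1(ii), via $I(R:A)+I(R:\overline{A})=2S(R)$, which does not use this lemma. Taking $B=\overline{A}$ in the definition of $\mathcal{A}_2$ yields $A\cap\overline{A}\neq\varnothing$, an immediate contradiction. So you need neither Lemma 2.1(2) nor the fact that $\overline{A}$ learns nothing. Lemma 2.1(2) is in any case purely combinatorial: by monotonicity, $A\in\mathcal{A}_2$ iff $A\notin\Gamma$ and $\overline{A}\notin\Gamma$.

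The genuine problem is your main, mixed-state route. The step ``recovery by $\overline{A}\cup R$ transfers to recovery by $\overline{A}$'' is not a technical hurdle that no-cloning or entropy bookkeeping can overcome. It is false, because the lemma itself is false for mixed-state perfect schemes.

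\textbf{Counterexample.} Take $P=\{1,2\}$ and $\Gamma=\{\{1,2\}\}$, so $\{1\}\in\mathcal{A}_2$. Use the quantum one-time pad: party $1$ holds two uniform classical bits $a,b$, and party $2$ holds $X^aZ^b|K\rangle$. This scheme is perfect. Party $1$'s state does not depend on the secret, party $2$'s state is $\frac{1}{4}\sum_{a,b}X^aZ^b\rho Z^bX^a=I/2$, and together they decrypt. If the key is purified by an environment $E$ (your $R$), then $\{2\}\cup E$ recovers the secret while $\{2\}$ alone learns nothing. That is exactly the transfer failing.

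The same happens for Cleve et al.'s $((3,4))$ scheme, obtained by discarding one share of the pure $((3,5))$ scheme; there every $2$-set lies in $\mathcal{A}_2$. More generally, by Gottesman's theorem every monotone access structure obeying no-cloning admits a perfect mixed-state scheme, and this includes the paper's own $R_1$.

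Hence the pure-state hypothesis is not an optional simplification to fall back on. It is necessary for the statement to be true, and you must state it as an explicit assumption. This matches Lemma 2.1 and Theorems 2.1--2.3, which all concern pure-state schemes. With that hypothesis, your pure-state argument is a complete proof.
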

	
	From Lemma 2.2 and Ref.\cite{ref31}, if there exists a PQSS scheme that realizes the QAS $\Gamma$, the scheme is perfect if and only if $\mathcal{A}_2=\varnothing$.

	Ref. \cite{ref8} further divides the unauthorized subsets in the $((k, n))$ quantum threshold access structure into forbidden set family $\Gamma_F$ and intermediate set family $\Gamma_I$. Based on authorized subset, forbidden set  and intermediate set, the $((k, n))$ threshold QSS is classified into two types: perfect $((k,n))$ threshold QSS schemes and ramp $((k,n))$ threshold QSS schemes.
	
	\begin{definition}\cite{ref8}
		A $((k, n))$ QSS scheme is called a $((k,L, n))$-threshold ramp QSS scheme if it satisfies the following conditions:
		\begin{enumerate}
			\item[(i)] A participant subset $B\subseteq P$ belongs to forbidden set $\Gamma_F$ if and only if $|B|\leqslant k-L$;
			\item[(ii)] A participant subset $B\subseteq P$ is an authorized set if and only if $|B|\geqslant k$;
		\end{enumerate}
		The above conditions further imply:
		\begin{enumerate}
			\item[(iii)] A participant subset $B\subseteq P$ belongs to a intermediate set $\Gamma_I$ if and only if $k-L<|B|<k$.
		\end{enumerate}
     \end{definition}
When $L=1$, the $((k,L,n))-$ ramp threshold QSS scheme degenerates into the $((k, n))$ perfect threshold QSS scheme in Ref.\cite{ref2}. Ref.\cite{ref8} proves that the $((k,L,n))-$ ramp scheme can be constructed in pure-state form when $n=2k-L$.

From Definition 2.2, if $B$ is an intermediate set, then its complement $\overline{B}$ also satisfies $k-L<|\overline{B}|<k$. Hence $\overline{B}$ is also an intermediate set; furthermore, if $B$ is a forbidden set if and only if its complement $\overline{B}$ is also an authorized set. Next, we prove that this conclusion also holds for general non-maximal QAS.

\begin{theorem}\label{lem:intermediate-complement}
	For non-maximal QAS $\Gamma$ with $2^P = \Gamma \cup \Gamma_F \cup \Gamma_I$, $\Gamma$ can be realized via a pure-state QSS scheme. Then:\\

  (i) for a set $A$, $A \in \Gamma_I$ if and only if $ \overline{A}\in \Gamma_I$;\\

  (ii) for a set $A$, $A \in \Gamma_F$ if and only if $ \overline{A}\in \Gamma$.

\end{theorem}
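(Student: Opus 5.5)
Our plan is to identify the forbidden family $\Gamma_F$ with $\mathcal{A}_1$ and the intermediate family $\Gamma_I$ with $\mathcal{A}_2$, and then read both claims off Lemma 2.1. We do this by working with the pure-state encoding $|\psi\rangle$ of the secret together with a reference system $R$ that purifies it.

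The key information-theoretic fact is complementarity for pure states. For a subset $A$ and its complement $\overline{A}$, the global state on $R A \overline{A}$ is pure. Hence $A$ carries no information about the secret (so $A\in\Gamma_F$) if and only if $\overline{A}$ can recover it perfectly (so $\overline{A}\in\Gamma$). This is the standard statement that a channel and its complementary channel satisfy: one is perfectly correctable exactly when the other is perfectly private. In terms of mutual information it reads
\begin{equation*}
I(R:A)+I(R:\overline{A}) = 2S(R) \quad\text{when the global state on } RA\overline{A} \text{ is pure,}
\end{equation*}
so $I(R:A)=0 \iff I(R:\overline{A})=2S(R)$. Establishing this identity and the two equivalences it encodes, rigorously and in the paper's language, is the main obstacle. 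We would cite the Cleve et al. argument (\cite{ref2}) that in a pure scheme the complement of an authorized set carries no information. We would also invoke its converse, i.e.\ the decoupling/Schumacher--Nielsen criterion: if $A$ is decoupled from $R$, then $\overline{A}$ can correct.

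Statement (ii) then follows directly. If $A\in\Gamma_F$, the identity above gives $\overline{A}\in\Gamma$. Conversely, if $\overline{A}\in\Gamma$, the identity gives $I(R:A)=0$, so $A\in\Gamma_F$; this direction also follows from no-cloning. As a consistency check: if $A\in\mathcal{A}_1$, Lemma 2.1(1) gives $\overline{A}\in\Gamma$, so $A\in\Gamma_F$. On the other hand, any $A\in\Gamma_F$ is disjoint from the authorized set $\overline{A}$, so $A\in\mathcal{A}_1$. This shows $\Gamma_F=\mathcal{A}_1$ and therefore $\Gamma_I=\mathcal{A}\setminus\Gamma_F=\mathcal{A}_2$.

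For statement (i), we use the partition $2^P=\Gamma\cup\Gamma_F\cup\Gamma_I$. Suppose $A\in\Gamma_I$. Then $\overline{A}\notin\Gamma$, since otherwise (ii) would force $A\in\Gamma_F$. Also $\overline{A}\notin\Gamma_F$, since otherwise (ii) would force $A=\overline{\overline{A}}\in\Gamma$. Hence $\overline{A}\in\Gamma_I$. The reverse implication is the same argument with the roles of $A$ and $\overline{A}$ swapped. Alternatively, (i) is exactly Lemma 2.1(2) once we know $\Gamma_I=\mathcal{A}_2$.
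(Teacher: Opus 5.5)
Your proposal is correct and rests on the same key fact as the paper's proof: the purification identity $I(R:A)+I(R:\overline{A})=2S(R)$. The paper obtains it in one line from $S(RA)=S(\overline{A})$ and $S(R\overline{A})=S(A)$, so it is less of an obstacle than you suggest, and (ii) is read off from it exactly as the paper does. The remaining differences are cosmetic: the paper gets (i) directly from $0<I(R:A)<2\log d_K \iff 0<I(R:\overline{A})<2\log d_K$ rather than from (ii) plus the partition, and your identification $\Gamma_F=\mathcal{A}_1$, $\Gamma_I=\mathcal{A}_2$ is not needed here (it is the paper's subsequent Theorem 2.3, which is proved from this theorem and Lemma 2.1).
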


\begin{proof}
	Let the quantum state shared by the participants $P$  be $\rho^K$, where $\rho^K$ belongs to a finite-dimensional Hilbert space. Let $R$ be the the purifying ancilla corresponding to the secret quantum state $\rho^K, I(R:A)$ denote the quantum mutual information between the system $R$ and $A$, where $A\subseteq P$.
	
\begin{align}
	I(R:A) &= S(R) + S(A) - S(RA), \label{eq:mutual-A} \\
	I(R:\overline{A}) &= S(R)+S(\overline{A})- S(R\overline{A}).
\end{align}
    Since $RA\overline{A}$ is a purifying quantum system, $S(RA)=S(\overline{A}),S(R\overline{A})=S(A)$.
	Adding the above two equations yields:
	\begin{equation}
		I(R:A) + I(R:\overline{A}) = 2S(R).
	\end{equation}
	Because the reduced state of $R$ is maximally mixed, $S(R)=S(\rho^K)=\log d_K$.  Hence
\begin{equation}
		I(R:A) + I(R:\overline{A}) = 2\log d_K.
\end{equation}

	We first prove statement (i). Suppose that $A \in \Gamma_I$. By definition, $0 < I(R:A) < 2\log d_K$. Using (4) we obtain
\begin{equation}
		0 < I(R:\overline{A}) < 2\log d_K.
\end{equation}
Therefore, $\overline{A} \in \Gamma_I$.
Applying the same argument with $A$ replaced by $\overline{A}$ proves the converse implication. Thus $A \in \Gamma_I$ if and only if $ \overline{A}\in \Gamma_I$.

We next prove statement (ii). Suppose that $A \in \Gamma_F$ . Then $I(R:A)=0.$ Eq. (4) implies $I(R:\overline{A}) = 2\log d_K.$ Therefore, $\overline{A}\in \Gamma $.
Conversely, suppose that $\overline{A}\in \Gamma $. Then $I(R:\overline{A}) = 2\log d_K.$. Again using (4), we obtain $I(R:A)=0$. and hence $A \in \Gamma_F$.Therefore,
$A \in \Gamma_F$ if and only if $ \overline{A}\in \Gamma$.
\end{proof}

In what follows, we prove that the converse proposition of Theorem 2.1 holds true as well.
\begin{theorem}
For non-maximal QAS $\Gamma$ with $2^P = \Gamma \cup \Gamma_F \cup \Gamma_I$, $\Gamma$ satisfies the following two conditions: (i) $B\in\Gamma_I$ if and only if $\overline{B}\in\Gamma_I$; (ii) $ B\in\Gamma_F$ if and only if $\overline{B}\in\Gamma$. Then there exists a pure-state isometric encoding

\[
\phi: \mathcal{H}_K \longrightarrow \bigotimes_{i \in P} \big(\mathcal{H}_i^{(1)} \otimes
                                                   \mathcal{H}_i^{(2)}\big)
\]
so that the following conditions hold:

(i) $A \in \Gamma$ if and only if $A$ can exactly recover the secret $|K\rangle$;

(ii) $A \in \Gamma_F$ if and only if $A$ contains zero information regarding the secret $|K\rangle$;

(iii) $A \in \Gamma_I$ if and only if $A$ carries partial information but cannot reconstruct the secret $|K\rangle$.
\end{theorem}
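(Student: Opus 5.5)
The plan is to build $\phi$ as a tensor product of two pure-state encodings, one in the registers $\mathcal{H}_i^{(1)}$ and one in $\mathcal{H}_i^{(2)}$. The first realizes a maximal QAS "sandwiched" inside $\Gamma$; the second injects partial information to exactly the sets in $\Gamma_I$. Concretely, I split the secret space as $\mathcal{H}_K\cong\mathcal{H}_{K_1}\otimes\mathcal{H}_{K_2}$ (enlarging $\mathcal{H}_K$ by a trivial factor if its dimension is prime). I then encode $K_1$ and $K_2$ separately, so that for every $A$ the mutual information $I(R:A)$ decomposes as $I(R_1:A^{(1)})+I(R_2:A^{(2)})$. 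This holds because the two encodings act on independent registers and the purifying ancilla factors as $R=R_1R_2$.

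For the first register, I take the maximal access structure $\Gamma^{(1)}:=\Gamma\cup\Gamma_I'$, where $\Gamma_I'\subseteq\Gamma_I$ is chosen so that $\Gamma^{(1)}$ is monotone, has pairwise intersecting members, and contains exactly one of $B,\overline{B}$ for each $B\in\Gamma_I$. Such a choice exists because, by hypothesis (i), $\Gamma_I$ is closed under complement. I would obtain it by a greedy extension to a maximal intersecting upset: every maximal intersecting family of subsets is self-dual, i.e.\ it contains exactly one of $B,\overline{B}$. Hypothesis (ii) ensures the extension never needs to touch $\Gamma_F$, since $\Gamma_F=\{\overline{B}:B\in\Gamma\}$. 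By the Gottesman/Cleve et al.\ result that every self-dual QAS admits a pure-state realization, I get an isometry $\phi_1$ on the $\mathcal{H}_i^{(1)}$ registers realizing $\Gamma^{(1)}$. For the second register, I do the symmetric thing with $\Gamma^{(2)}:=\Gamma\cup(\Gamma_I\setminus\Gamma_I')$. This is the complementary choice, and it is also an upset containing exactly one of each complementary pair; I realize it with $\phi_2$ on the $\mathcal{H}_i^{(2)}$ registers. I then set $\phi=\phi_1\otimes\phi_2$.

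Verification then proceeds case by case. If $A\in\Gamma$, then $A$ lies in both $\Gamma^{(1)}$ and $\Gamma^{(2)}$, so it recovers $K_1$ and $K_2$, hence $|K\rangle$. If $A\in\Gamma_F$, then $\overline{A}\in\Gamma$, so $A$ is forbidden in both maximal schemes and carries zero information. If $A\in\Gamma_I$, $A$ belongs to exactly one of $\Gamma^{(1)},\Gamma^{(2)}$. It therefore recovers one factor but learns nothing about the other, so $0<I(R:A)<2\log d_K$, which is partial information without recovery. The converses follow because the three cases partition $2^P$.

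I expect the main obstacle to be showing that $\Gamma^{(2)}$ is monotone and intersecting, given that $\Gamma^{(1)}$ is. Monotonicity of the complement-choice needs an argument: if $B\subseteq C$ with $B\in\Gamma_I\setminus\Gamma_I'$ and $C\in\Gamma_I'$, then $\overline{C}\subseteq\overline{B}$ with $\overline{C}\notin\Gamma^{(1)}$ and $\overline{B}\in\Gamma_I'$, a contradiction only if $\Gamma_I'$ is an upset. I would first try to prove $\Gamma^{(2)}=\{\overline{B}:B\notin\Gamma^{(1)}\}\cup\Gamma$, which inherits monotonicity and intersection from self-duality of $\Gamma^{(1)}$. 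If monotonicity still fails, I would fall back to $\mathcal{H}_K$-sized layers indexed by each complementary pair in $\Gamma_I$, which is costlier but straightforward.
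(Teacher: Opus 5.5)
Your construction follows the paper's route: sandwich $\Gamma$ between two maximal QAS and tensor two perfect pure-state encodings. The step you flagged is a genuine gap, and it cannot be closed, because two layers do not suffice in general. The paper's iterative procedure asserts the same two-layer decomposition without proof.

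Suppose $\Gamma=\mathcal{M}_1\cap\mathcal{M}_2$ with both $\mathcal{M}_j$ self-dual upsets. Then $\Gamma_I$ is partitioned into $\mathcal{M}_1\cap\Gamma_I$ and $\mathcal{M}_2\cap\Gamma_I$. Both pieces are closed under supersets inside $\Gamma_I$, so each is also closed under subsets inside $\Gamma_I$. (Your own monotonicity check actually requires $\Gamma_I'$ to be down-closed in $\Gamma_I$, not an upset.) Hence $\Gamma_I'$ must be a union of connected components of the comparability graph on $\Gamma_I$, containing exactly one member of each complementary pair.

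Take $P=\{1,2,3,4\}$ and $\Gamma_0=\{123,124\}$. The chain $\{1\}\subset\{1,2\}\supset\{2\}\subset\{2,3,4\}$ lies entirely in $\Gamma_I$ and joins $\{1\}$ to its complement. So no two-layer decomposition exists. In the paper's procedure with $B_1=\{1\}$, $\Gamma_{B_1}$ is already the maximal star at $1$, while $\Gamma_{\overline{B_1}}$ is not maximal.

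Your proposed repair $\Gamma^{(2)}=\{\overline{B}:B\notin\Gamma^{(1)}\}\cup\Gamma$ collapses. By self-duality, $\{\overline{B}:B\notin\Gamma^{(1)}\}=\Gamma^{(1)}$, so it just returns $\Gamma^{(1)}$.

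Your fallback is therefore necessary, and it works once made precise. For each $B\in\Gamma_I$ (not one per pair), extend $\Gamma\cup\{\overline{B}\}$ to a maximal intersecting family $\mathcal{M}_B$. This is possible because $\Gamma\cup\{\overline{B}\}$ is intersecting: no member of the upset $\Gamma$ lies inside $B\notin\Gamma$. Then $\Gamma=\bigcap_{B}\mathcal{M}_B$, and by self-duality $A$ lies in no $\mathcal{M}_B$ exactly when $\overline{A}\in\Gamma$.

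The second gap is the secret splitting. If $d=\dim\mathcal{H}_K$ is prime, there is no factorization $\mathcal{H}_K\cong\mathcal{H}_{K_1}\otimes\mathcal{H}_{K_2}$ with both factors nontrivial. This includes the qubit secrets used throughout this paper. Appending a trivial or fixed ancilla factor puts the whole secret in one layer and nothing in the other. A set in $\Gamma_I$ would then either recover $|K\rangle$ or learn nothing, which violates (iii). The same obstruction hits any multi-layer version fed by tensor factors.

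Use the paper's repetition embedding instead: $\tau|x\rangle=|x\rangle^{\otimes t}$, followed by $\bigotimes_j\phi_j$.
\begin{itemize}
\item A set authorized in every layer inverts $\tau$ and recovers $|K\rangle$.
\item A set authorized in no layer sees a tensor product of constant channels, so it gets zero information.
\item A set authorized in some but not all layers holds a classically correlated copy of $x$. It has $I(R:A)=\log d$, strictly between $0$ and $2\log d$, so it gains partial information but cannot recover.
\end{itemize}
With these two changes, your case analysis goes through.
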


\begin{proof}
Let the intermediate set be $\Gamma_I$=$\{B_1,\overline{B_1},B_2,\overline{B_2},\dots,B_m,\overline{B_m }\}$. Add $B_1$ and $\overline{B_1}$ to $\Gamma $ separately to yield two new access structures, denote them by $ \Gamma_ {B_1}$  and $ \Gamma_{\overline{B_1}}$, and check whether they are maximal QAS. It can be readily verified that the two access structures are either both maximal QAS, or neither of them is a maximal QAS. If both are maximal QAS, then we have that $\Gamma= \Gamma_ {B_1} \cap \Gamma_ {\overline{B_1}}$; if neither of them is a maximal QAS, add $B_2$ and $\overline{B_2}$ to $ \Gamma_ {B_1}$ and $ \Gamma_ {\overline{B_1}}$, respectively, separately to yield two new access structures, denote them by $\Gamma_ {B_1,B_2}$  and $\Gamma_{\overline{B_1},\overline{B_2}}$, and check whether they are maximal QAS.If both are maximal QAS, then we have that $\Gamma= \Gamma_ {B_1,B_2} \cap \Gamma_{\overline{B_1},\overline{B_2}}$. If neither of them is a maximal QAS, repeating the above procedure, we eventually get $\Gamma= \Gamma_ {B_1,B_2,\dots,B_r} \cap \Gamma_ {\overline{B_1},\overline{B_2},\dots,\overline{B_r}}$, where $\Gamma_ {B_1,B_2,\dots,B_r}$ and $ \Gamma_ {\overline{B_1},\overline{B_2},\dots,\overline{B_r}}$ are maximal QAS with $r\in [1,m]$. Furthermore, from the construction of these two maximal QAS, it follows that $A$ can only belong to one of the two if $A \in \Gamma_I $ .

Since perfect pure-state QSS schemes exist for maximal QAS, there accordingly exist pure-state isometric encodings for $\Gamma_ {B_1,B_2,\dots,B_r}$ and $ \Gamma_ {\overline{B_1},\overline{B_2},\dots,\overline{B_r}}$, respectively.

For $\Gamma_ {B_1,B_2,\dots,B_r}$ , there exists a pure-state isometric encoding
\[\phi_1: \mathcal H_{K_1 }\to \bigotimes_{i \in P} \mathcal H_i^{(1)}\]

For $ \Gamma_{\overline{B_1},\overline{B_2},\dots,\overline{B_r}}$, there exists a pure-state isometric encoding
 \[\phi_2:\mathcal H_{K_2 }\to \bigotimes_{i \in P} \mathcal H_i^{(2)}\]
so that $ A$ can reconstruct the secret $ K_1$  when $A\in \Gamma_{B_1,B_2,\dots,B_r}$; $ A$ can reconstruct the secret $K_2$  when $A \in  \Gamma_ {\overline{B_1},\overline{B_2},\dots,\overline{B_r}}$; and $ A$ cannot reconstruct the secret $ K_1$  when $A\notin \Gamma_{B_1,B_2,\dots,B_r}$; $ A$ cannot reconstruct the secret $K_2$  when $A \notin  \Gamma_ {\overline{B_1},\overline{B_2},\dots,\overline{B_r}}$.

Let the dimension of the secret space $\mathcal H_K$ be $d \geq 2$, and fix an orthonormal basis $\{|x\rangle \,\colon\, x = 0,\,1,\,\dots,\,d - 1\}$. Define the isometry
\[
\tau:\mathcal H_{K
}\to \mathcal{H}_{K_1} \otimes  \mathcal{H}_{K_2}
\]
satisfies
\[
\tau |x\rangle = |x\rangle_{K_1} \otimes |x\rangle_{K_2}
\]
So
\[
\tau\left( \sum_{x} \alpha_x |x\rangle \right) = \sum_{x} \alpha_x |x\rangle^{\otimes 2}
\]
Let \[
\phi = \left( \bigotimes_{j=1}^2 \phi_j \right) \tau
\]

If $A \in \Gamma $, then $A \in \Gamma_{B_1,B_2,\dots,B_r}$ , and $A \in \Gamma_ {\overline{B_1},\overline{B_2},\dots,\overline{B_r}}$, thus $A$ can reconstruct ${K_1}$ and ${K_2}$. By further applying the inverse isometry of the outer encoding $\tau$, the original secret $|K\rangle$ can be recovered; if $A \in \Gamma_I$, then $A$ belongs to only one of $ \Gamma_{B_1,B_2,\dots,B_r}$ and $\Gamma_ {\overline{B_1},\overline{B_2},\dots,\overline{B_r}}$, thus $A$ can reconstruct exactly one of $|K_1\rangle$ and $|K_2\rangle$, i.e., carry partial information but cannot reconstruct the secret $|K\rangle$; if $A \in \Gamma_F$, then $A$  belongs to neither of $ \Gamma_{B_1,B_2,\dots,B_r}$ and $\Gamma_ {\overline{B_1},\overline{B_2},\dots,\overline{B_r}}$, thus $A$ cannot reconstruct either of $|K_1\rangle$ and $|K_2\rangle$, so $A$ contains zero information regarding the secret $|K\rangle$.

Since $\tau$ and each $\phi_i$ are  isometries, $ \phi $ is also an isometry. Therefore, this scheme is a pure-state scheme.

\end{proof}

\begin{theorem}
For non-maximal QAS $\Gamma$ with $2^P = \Gamma \cup \Gamma_F \cup \Gamma_I$, and $\Gamma$ can be realized via a pure-state QSS scheme. Then $\Gamma$ has the following properties:

(i) $\Gamma_I=\mathcal{A}_2$;

(ii) $\Gamma_F=\mathcal{A}_1$.
\end{theorem}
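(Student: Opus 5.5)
Since $\mathcal{A}=\mathcal{A}_1\cup\mathcal{A}_2=\Gamma_F\cup\Gamma_I$ with both unions disjoint, it suffices to prove the two inclusions $\mathcal{A}_1\subseteq\Gamma_F$ and $\mathcal{A}_2\subseteq\Gamma_I$. Indeed, if both hold, then any $A\in\Gamma_F$ lies in $\mathcal{A}$ and cannot lie in $\mathcal{A}_2\subseteq\Gamma_I$ (because $\Gamma_F\cap\Gamma_I=\varnothing$), so $A\in\mathcal{A}_1$. Symmetrically, any $A\in\Gamma_I$ lies in $\mathcal{A}_2$. This gives both equalities (i) and (ii) at once. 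The whole argument will rest on combining Lemma 2.1 with Theorem 2.1, both of which apply because $\Gamma$ is realized by a pure-state scheme.

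\textbf{Step 1: $\mathcal{A}_1\subseteq\Gamma_F$.} Take $A\in\mathcal{A}_1$. By Lemma 2.1(1), $\overline{A}\in\Gamma$. Theorem 2.1(ii) states that $A\in\Gamma_F$ if and only if $\overline{A}\in\Gamma$, so $A\in\Gamma_F$. A direct alternative is also available: by definition of $\mathcal{A}_1$ there is $B\in\Gamma$ with $A\cap B=\varnothing$. Then $B\subseteq\overline{A}$, so monotonicity gives $\overline{A}\in\Gamma$, and the mutual-information identity (4) then forces $I(R:A)=0$.

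\textbf{Step 2: $\mathcal{A}_2\subseteq\Gamma_I$.} Take $A\in\mathcal{A}_2$, so $A\notin\Gamma$ and $A$ meets every authorized set. Suppose for contradiction that $A\in\Gamma_F$. By Theorem 2.1(ii), $\overline{A}\in\Gamma$. But $A\cap\overline{A}=\varnothing$, which contradicts $A$ meeting every member of $\Gamma$. Since $A$ is unauthorized and not forbidden, and $\mathcal{A}=\Gamma_F\cup\Gamma_I$, we get $A\in\Gamma_I$. Lemma 2.1(2) gives the consistency check $\overline{A}\in\mathcal{A}_2$, which matches Theorem 2.1(i).

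I do not expect a real obstacle here; the statement follows almost formally from the earlier results. The one point needing care is implicit in Step 1. We use that $\Gamma$ is monotone, i.e.\ closed under supersets, which is standard for access structures. We also use that the trichotomy $\Gamma,\Gamma_F,\Gamma_I$ is defined through $I(R:A)$ exactly as in the proof of Theorem 2.1: $I(R:A)=2\log d_K$ for $\Gamma$, $I(R:A)=0$ for $\Gamma_F$, and strictly in between for $\Gamma_I$. This ensures the three families are pairwise disjoint and cover $2^P$. With that interpretation fixed, the final step is the set-theoretic bookkeeping from the first paragraph.
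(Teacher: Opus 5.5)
Your proof is correct and follows essentially the same route as the paper: you combine Lemma 2.1 with Theorem 2.1(ii) to get one inclusion in each direction, then use the two disjoint partitions $\mathcal{A}=\mathcal{A}_1\cup\mathcal{A}_2=\Gamma_F\cup\Gamma_I$ to upgrade these inclusions to equalities. The differences are cosmetic. You prove $\mathcal{A}_1\subseteq\Gamma_F$ directly, whereas the paper proves $\Gamma_I\subseteq\mathcal{A}_2$ by contradiction using the same implication $A\in\mathcal{A}_1\Rightarrow\overline{A}\in\Gamma\Rightarrow A\in\Gamma_F$. In Step 2 you get the contradiction from the definition of $\mathcal{A}_2$ via $A\cap\overline{A}=\varnothing$ rather than from Lemma 2.1(2), which makes that step slightly more self-contained.
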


\begin{proof}
	We first prove that $\mathcal{A}_2\subseteq \Gamma_I$. Since $\mathcal{A}_2\neq \varnothing$, take $A\in \mathcal{A}_2$. By the definition of $\mathcal{A}_2$, we have $A\notin\Gamma$. Next we prove that $A\notin\Gamma_F$. Assume $A\in \Gamma_F$, i.e., $A$ is a forbidden set. By Theorem 2.1, which implies $\bar{A}\in\Gamma$. However,  Lemma 2.1 gives $\bar{A}\in \mathcal{A}_2$, leading to a contradiction. Thus the assumption is invalid, so $A\notin \Gamma_F$. Consequently $A\in \Gamma_I$.
		
	Then we prove that $\Gamma_I\subseteq \mathcal{A}_2$. Let $A \in \Gamma_I$. By the definition of an intermediate set, the elements of an intermediate set are first unauthorized subsets, thus $A \in \mathcal{A}$. Next, we prove $A \notin \mathcal{A}_1$. If $A \in \mathcal{A}_1$, then by Lemma 2.1, $\overline{A} \in \Gamma$. Further, by Theorem 2.1, $A \in \Gamma_F$, which contradicts $A \in \Gamma_I$. Thus, we have $A \in \mathcal{A}_2$, which implies $\Gamma_I \subseteq \mathcal{A}_2$.
	
By combining $\mathcal{A}_1 \cup \mathcal{A}_2=\Gamma_I\cup \Gamma_F$ and (i), (ii) can be derived.
\end{proof}
	
Theorem 2.3 presents a computational approach to calculate intermediate sets and forbidden sets for pure-state QSS schemes involving intermediate sets.	

Let $t_{\max}^{(1)}=max_{A\in\mathcal{A}_1}|A|$, $t_{\min}^{(2)}=min_{A\in\mathcal{A}_2}|A|$. For the $((k,n))$ ramp threshold QAS, it is obvious that $t_{\min}^{(2)}=t_{\max}^{(1)}+1$.
	
\begin{corollary}
		For the ramp quantum threshold access structure $((k,L,n))$ with $n=2k-L$, and $1<L<k$, we have
\begin{align}
		t_{\min}^{(2)}=k-L+1.
\end{align}
\end{corollary}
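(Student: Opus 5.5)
The plan is to combine Theorem 2.3(i), which gives $\mathcal{A}_2=\Gamma_I$ for any pure-state realizable non-maximal QAS, with Definition 2.2, which describes $\Gamma_I$ for the ramp threshold structure explicitly. By the result of Ref.~\cite{ref8} quoted after Definition 2.2, the $((k,L,n))$ ramp structure with $n=2k-L$ admits a pure-state realization. Since $L>1$, it genuinely has intermediate sets, so it is a non-maximal QAS in the sense of the paper. Theorem 2.3 therefore applies, and
\[
t_{\min}^{(2)}=\min_{A\in\mathcal{A}_2}|A|=\min_{A\in\Gamma_I}|A|.
\]

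Next I compute $\min_{A\in\Gamma_I}|A|$ from Definition 2.2(iii). A set $B$ lies in $\Gamma_I$ if and only if $k-L<|B|<k$, so every intermediate set has size at least $k-L+1$. To see that this bound is attained, I need a subset of size exactly $k-L+1$ that lies strictly below $k$ and fits inside $P$:
\begin{itemize}
\item The condition $L>1$ gives $k-L+1<k$.
\item The condition $L<k$ gives $k-L+1\ge 2\le n$, and since $n=2k-L\ge k$, such a subset exists in $P$.
\end{itemize}
Any subset of $P$ with exactly $k-L+1$ elements is therefore intermediate, which yields $t_{\min}^{(2)}=k-L+1$.

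As an independent check that avoids Theorem 2.3, I would verify the combinatorial definition of $\mathcal{A}_2$ directly. For an unauthorized $A$ (i.e.\ $|A|<k$), we have $A\in\mathcal{A}_1$ exactly when some authorized $B$ with $|B|\ge k$ fits inside $\overline{A}$. This happens precisely when $n-|A|\ge k$, i.e.\ $|A|\le k-L$, using $n=2k-L$. Hence $\mathcal{A}_2=\{A: k-L+1\le |A|\le k-1\}$, which agrees with $\Gamma_I$ and gives the same minimum. It also confirms the remark $t_{\min}^{(2)}=t_{\max}^{(1)}+1$, since $t_{\max}^{(1)}=k-L$.

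The only delicate point is justifying that Theorem 2.3 applies, namely that the ramp structure is pure-state realizable and non-maximal. The direct combinatorial computation settles this, since it does not depend on the theorem. I would present the direct argument as the main proof and Theorem 2.3 as a consistency remark.
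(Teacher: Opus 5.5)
Your proposal is correct, and your first route is essentially the paper's. The paper writes down $\mathcal{A}_1=\{A:|A|\le k-L\}$ and $\mathcal{A}_2=\{A:k-L<|A|\le k-1\}$ by appealing to its realizability results together with Definition 2.2, and then reads off the minimum. (It cites Theorem 2.2, but the identification $\Gamma_F=\mathcal{A}_1$, $\Gamma_I=\mathcal{A}_2$ it actually uses is Theorem 2.3.)

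The direct computation you make your main proof is a genuinely different and more elementary route. Since $A\cap B=\varnothing$ iff $B\subseteq\overline{A}$, an unauthorized $A$ lies in $\mathcal{A}_1$ iff $n-|A|\ge k$, i.e.\ $|A|\le n-k=k-L$. So $\mathcal{A}_2$ comes straight from its combinatorial definition, without the pure-state realizability of the ramp scheme or the entropy argument behind Theorems 2.1 and 2.3. It also supplies what the paper's closing sentence leaves implicit, namely that the minimum is attained: $L>1$ gives $k-L+1\le k-1<n$, so $\mathcal{A}_2\neq\varnothing$ and every $(k-L+1)$-subset of $P$ belongs to it.

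The paper's route buys the interpretation that motivates stating this as a corollary: the smallest intermediate sets of the ramp scheme are exactly the smallest members of $\mathcal{A}_2$. Yours shows that the value of $t_{\min}^{(2)}$ is a purely combinatorial fact about the threshold structure with $n=2k-L$, and it verifies $\Gamma_I=\mathcal{A}_2$ in this instance rather than importing it. One cosmetic point: the chain ``$k-L+1\ge 2\le n$'' is awkward, and what you actually use there is $k-L+1<k<n$.
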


\begin{proof}
		By Theorem 2.2 we have
    \[
    \begin{aligned}
	\mathcal{A}_1 &= \bigl\{ A \in \mathcal{A} \bigm| |A| \leqslant k-L \bigr\}, \\
	\mathcal{A}_2 &= \bigl\{ A \in \mathcal{A} \bigm| k-L < |A| \leqslant k-1 \bigr\}.
    \end{aligned}
    \]
An element $A$ in $\mathcal{A}_2$ is an unauthorized subset that intersects with all authorized subsets containing $k$ participants. The number of such unauthorized subsets is at least $t_{\min}^{(2)}$, hence $t_{\min}^{(2)} = k - L + 1$.

\end{proof}

From Corollary 2.1 and the relation $t_{\min}^{(2)} = t_{\max}^{(1)} + 1$, we obtain $t_{\max}^{(1)} = k - L$.

	\subsection{Optimal restricted QAS}
	\begin{definition}\cite{ref15}
		The restriction of a QAS $\Gamma=\{A_1,A_2,\dots,A_r\}$ realizable on a subset $B\subseteq P$ is defined as
		\[
		\Gamma|_B=\{A_i\cap B: A_i\in\Gamma\},
		\]
		where $\Gamma|_B$ satisfies the no-cloning theorem, and $B\cap A_i\neq \varnothing$ holds for any $A_i\in\Gamma$.
		If $\Gamma|_B$ cannot be further improved, it is called minimal. If there does not exist another subset $D\subseteq P$ with $|D|<|B|$ such that $\Gamma|_D$ is minimal, then $\Gamma|_B$ is said to be optimal.
	\end{definition}
	
	\begin{remark}
		When discussing the improved QAS, access structures of the forms $\Gamma|_B=\{12\}$ and $\Gamma|_B=\{4\}$, i.e., minimal access structures containing exactly one minimal authorized subset, suffer degraded security. Therefore, we stipulate in this paper that such cases are excluded from the improved QAS.
	\end{remark}

\begin{example}
	Let the minimal QAS be $\Gamma_0=\{12,134,135,2345\}$. Four qualified subsets $B$ can be found, denoted by $B_1,B_2,B_3,B_4$, where
	\[
	B_1=\{123\},\; B_2=\{1234\},\; B_3=\{1235\},\; B_4=\{1245\}.
	\]
	Then
	\[
	\begin{aligned}
		\Gamma_0|_{B_1}&=\{12,13,23\},\\
		\Gamma_0|_{B_2}&=\{12,13,234\},\\
		\Gamma_0|_{B_3}&=\{12,13,235\},\\
		\Gamma_0|_{B_4}&=\{12,14,15,245\}.
	\end{aligned}
	\]
	By Definition 2.3, $\Gamma_0|_{B_1}$ is an optimal restricted access structure.
\end{example}
	
\subsection{Hyperstar QAS and its improvement}
This section proposes an improved QAS corresponding to hyperstar QAS, namely the QAS composed of representative elements.
	
\subsubsection{Hyperstar QAS}
	\begin{definition}\cite{ref30}
		Let $V=\{v_1,v_2,\dots,v_n\}$ be a finite set. A hypergraph $H=\{E_1,E_2,\dots,E_m\}$ on $V$ is a finite family of subsets of $V$ satisfying:
		\begin{enumerate}
			\item[(a)] $E_i\neq \varnothing (i=1,2,\dots,m)$.
			\item[(b)] $ \bigcup_{i=1}^m E_i=V$.
		\end{enumerate}
		In hypergraph $H$, elements $v_1,v_2,\dots,v_n$ of $V$ are called vertices, and sets $E_1,E_2,\dots,E_m$ are called hyperedges.
		For convenience, we adopt the following convention: Let $H(V,E)$ denote a hypergraph with vertex set $V$ and hyperedge set $E$.
	\end{definition}
	
Let each participant correspond to a vertex of a hypergraph and each minimal authorized subset correspond to a hyperedge of the hypergraph. Then every minimal access structure $\Gamma_0\subseteq 2^P$ on $P$ is in one-to-one correspondence with a hypergraph $H(V,E)$.
	
	\begin{definition}\cite{ref30}
		A hypergraph $H(V,E)$ is called a hyperstar if there exists a hyperedge sequence $E_1,E_2,\dots,E_{m-1}$ in $E$ such that
		\begin{enumerate}
			\item[(1)] $\bigcap_{E_i\in E}E_i\neq \varnothing$;
			\item[(2)] For any $i$, there exists $v\in E_i$ satisfying $v\notin E_j$ for all $i\neq j$.
		\end{enumerate}
	\end{definition}

It follows from Definition 2.5 that if a hyperstar $H(V,E)$ contains only one hyperedge, i.e., $E=\{E_1\}$, it can be realized by the $(|E_1|,|E_1|)$ Shamir threshold secret sharing scheme. Its optimal information rate satisfies $\rho^*(E)=1$, which means the access structure is ideal \cite{ref30}. Since a hyperstar access structure with two hyperedges is a hyperpath, the optimal information rate $\rho^*(\Gamma_i)$  for the access structures induced by hyperpaths was proven to be $2/3$ [30]. Ref. \cite{ref29} constructs classical secret sharing schemes with optimal information rate for hyperstar with three hyperedges QAS $\Gamma_i$,  where $i\in[4]$.
	
	\begin{lemma}[\cite{ref29}, Theorem 3.1]
		Let $\Gamma_i$ be a hyperstar with 3 hyper-edges. Then there exists a secret sharing scheme on $\Gamma_i$ with optimal information rate $\rho^*(\Gamma_i) = 1$,
 $i \in [2]$.
	\end{lemma}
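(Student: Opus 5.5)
The statement is a cited result (\cite{ref29}, Theorem 3.1): for the first two isomorphism classes $\Gamma_1,\Gamma_2$ of three-hyperedge hyperstars, the access structure is ideal. I will prove it constructively. An information rate of $1$ is always an upper bound, since every non-redundant participant must receive a share at least as large as the secret. So it suffices to exhibit, for each $i\in[2]$, an ideal linear scheme, one in which every share is a single field element.

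\textbf{Step 1: fix the two structures.} I would first write down explicitly the two classes $\Gamma_1,\Gamma_2$ from the four-class classification in \cite{ref29}. By Definition 2.5 each hyperedge $E_j$ has a private vertex and all three share a common core $C=E_1\cap E_2\cap E_3\neq\varnothing$. The classes differ in the pairwise intersections beyond $C$. I expect the ideal cases to be those where the pairwise intersections coincide with $C$, or where they are nested in a compatible way. The main obstacle is pinning down exactly which configuration each $\Gamma_i$ is, because that information sits in \cite{ref29} rather than in this paper. I would settle it by checking which configurations avoid a forbidden minor such as the hyperpath structure, whose rate is $2/3$.

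\textbf{Step 2: build a vector-space (Brickell) construction.} Over a large finite field $\mathbb{F}_q$, assign to every participant a vector $v_p\in\mathbb{F}_q^{t}$ and use the dealer vector $e_1$. The requirement is that $e_1\in\mathrm{span}\{v_p:p\in A\}$ exactly when $A\in\Gamma_i$. For the simplest class, where the common core $C$ equals every pairwise intersection, I would nest threshold schemes:
\begin{itemize}
\item split the secret $s$ additively as $s=a+b$;
\item give $a$ to the core $C$ by an $(|C|,|C|)$ scheme;
\item share $b$ so that it is recoverable from any single block $E_j\setminus C$.
\end{itemize}
The last point makes each block $E_j\setminus C$ behave as one compound participant in a $1$-out-of-$3$ structure. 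Share $b$ inside each block by a $(|E_j\setminus C|,|E_j\setminus C|)$ additive scheme, so every participant still holds a single field element and the rate is $1$.

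\textbf{Step 3: handle the second class and verify.} For the second class I would use the same template but add a shared vertex set between two hyperedges. Recoverability from unions would then be controlled by choosing generic vectors in a suitable dimension and invoking the Schwartz--Zippel lemma for large $q$. Verification has two parts:
\begin{itemize}
\item \emph{Authorized sets:} each minimal authorized set $E_j$ recovers $a$ (it contains $C$) and $b$ (it contains $E_j\setminus C$).
\item \emph{Unauthorized sets:} if a set misses a core vertex, $a$ stays uniformly random, so it learns nothing. If instead it misses a vertex of every block, $b$ stays uniformly random, and again it learns nothing.
\end{itemize}
Perfectness together with single-element shares gives $\rho^*(\Gamma_i)=1$ for $i\in[2]$.
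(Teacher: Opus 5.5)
The paper gives no proof of this lemma; it simply imports it from Ref.~[29], so any constructive argument is your own route. Your treatment of $\Gamma_1$ is correct and complete. Table 1 of the paper already fixes $\Gamma_1=\{A_1A_2,A_1A_3,A_1A_4\}$, where all pairwise intersections equal the core $A_1$, and $\Gamma_2=\{A_1A_2A_4,A_1A_2A_5,A_1A_3\}$. So Step 1 needs no forbidden-minor search, and your Step 2 scheme together with the bound $\rho\le 1$ settles $i=1$.

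The gap is Step 3, which is exactly the case Step 2 does not cover. In $\Gamma_2$ the blocks $A_2\cup A_4$ and $A_2\cup A_5$ overlap in $A_2$. Your template of sharing $b$ independently inside each block therefore gives every participant of $A_2$ two field elements, and the rate falls to $1/2$. Your proposed repair, ``generic vectors in a suitable dimension plus Schwartz--Zippel,'' is not a construction. Genuinely generic vectors in $\mathbb{F}_q^{t}$ realize a threshold structure (any $t$ of them span $e_1$, fewer never do), not $\Gamma_2$. Schwartz--Zippel only helps once you have a specific parametrized family in which every authorized set spans $e_1$ identically and every unauthorized set fails for some parameter value, and producing that family is the entire content of the step. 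Likewise, your line ``if it misses a vertex of every block, $b$ stays uniformly random'' relied on the block sharings being independent, which no longer holds once the blocks share $A_2$.

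The missing idea is to share along the read-once formula $A_1\wedge\bigl((A_2\wedge(A_4\vee A_5))\vee A_3\bigr)$, in the manner of Benaloh--Leichter:
\begin{enumerate}
\item Keep $s=a+b$ with $b$ uniform, and share $a$ additively on $A_1$.
\item Share $b$ additively on $A_3$.
\item Pick $c$ uniform, set $d=b-c$, and share $c$ additively on $A_2$.
\item Share $d$ additively on $A_4$ and, with fresh randomness, additively on $A_5$.
\end{enumerate}
Each participant occurs once in the formula and so holds one field element, and each hyperedge clearly recovers $s$.

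For an unauthorized set $A$ there are three cases:
\begin{enumerate}
\item If $A\not\supseteq A_1$, its view is a function of $(b,c)$ plus independent noise, hence independent of $s$.
\item If $A\supseteq A_1$, then $A$ misses a vertex of $A_3$. Suppose it also misses a vertex of $A_2$; then its view is determined by $(s-b,\,b-c)$, which is uniform on $\mathbb{F}_q^2$ for each fixed $s$.
\item Otherwise $A\supseteq A_1\cup A_2$ and misses a vertex of both $A_4$ and $A_5$; its view is determined by $(s-b,\,c)$, again uniform.
\end{enumerate}
Combined with your upper bound (every participant lies in some hyperedge, so none is redundant), this gives $\rho^*(\Gamma_2)=1$.
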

	
	\begin{lemma}[\cite{ref29}, Theorem 3.2]
		Let $\Gamma_i$ be a hyperstar with 3 hyper-edges. Then there exists a secret sharing scheme on $\Gamma_i$ with optimal information rate $\rho^*(\Gamma_i) = 2/3$, where $i \in [3,4]$.
	\end{lemma}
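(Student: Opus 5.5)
The plan is to prove the two inequalities $\rho^*(\Gamma_i)\le 2/3$ and $\rho^*(\Gamma_i)\ge 2/3$ separately for $i\in[3,4]$. I cannot see the explicit lists of minimal authorized sets of $\Gamma_3$ and $\Gamma_4$ from \cite{ref29}, so the plan is written in terms of the hyperedges $E_1,E_2,E_3$ and the common centre $v\in E_1\cap E_2\cap E_3$. It will have to be specialised to each of the two isomorphism classes.

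\textbf{Upper bound.} I would use the general gap result of Mart\'{\i}-Farr\'e and Padr\'o: every access structure that is not ideal satisfies $\rho^*\le 2/3$. It then suffices to show that $\Gamma_3$ and $\Gamma_4$ are not ideal, i.e.\ are not matroid ports. The route is to exhibit a forbidden minor, obtained by deleting and contracting participants.
\begin{itemize}
\item Contract the centre $v$ (that is, assume $v$ is present) and delete the vertices outside two suitably chosen hyperedges $E_j,E_k$.
\item Use the vertices private to each hyperedge, guaranteed by condition (2) of Definition 2.5, as the surviving participants.
\item The aim is a minor with minimal sets $\{ab,bc,cd\}$, the path $P_4$ on four vertices, which is the classical non-ideal structure.
\end{itemize}
Which overlaps among $E_1,E_2,E_3$ distinguish classes 3 and 4 from the ideal classes 1 and 2 of Lemma 2.3 decides the choice of $E_j,E_k$. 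A non-trivial intersection pattern beyond the common centre should produce the $P_4$ pattern after contraction. Alternatively, one can bypass the theorem by a direct entropy argument. Take an independent sequence $\varnothing\subset A_1\subset A_2\subset A_3$ of unauthorized sets, each made authorized by adding one fixed participant $x$. The standard Blundo--De Santis--Stinson--Vaccaro inequality then gives $H(x)\ge \tfrac32 H(S)$, which is again $\rho\le 2/3$.

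\textbf{Lower bound.} I would use Stinson's $\lambda$-decomposition construction with $\lambda=2$.
\begin{itemize}
\item Choose three ideal sub-access structures $\Gamma^{(1)},\Gamma^{(2)},\Gamma^{(3)}$. Each is either a single hyperedge realised by an $(|E|,|E|)$ Shamir scheme, or a two-edge substructure sharing only the centre that is still ideal.
\item Arrange them so that every hyperedge $E_j$ is authorized in at least two of the $\Gamma^{(t)}$.
\item Arrange also that every participant appears in at most three of them.
\end{itemize}
Sharing a two-part secret $(s_1,s_2)$ via the decomposition then gives each participant at most $3$ shares for $2$ secret symbols. That yields rate $2/3$, and perfectness is inherited from the ideal components.

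The natural candidate is to let $\Gamma^{(t)}$ be the hyperedge $E_t$ together with $E_{t+1}$ (indices mod $3$), whenever that pair forms an ideal structure. If a pair is not ideal, I would fall back to single hyperedges with weights chosen so that the covering count is $2$.

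\textbf{Main obstacle.} I expect the main difficulty to be the bookkeeping in the lower bound. The centre $v$ lies in all three hyperedges, and participants in pairwise intersections may be counted too often. The decomposition therefore has to be chosen so that $v$ and the pairwise-intersection vertices each appear in exactly three components. This requires a case check against the concrete structure of each class $\Gamma_3,\Gamma_4$.
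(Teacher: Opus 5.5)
The paper gives no proof of this lemma; it imports it from Ref.~[29]. So the question is whether your plan closes on its own. The lower-bound half does, but the upper-bound half, as written, does not.

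Your concrete recipe (contract the centre and delete the vertices outside $E_j\cup E_k$) cannot yield $P_4$. The only vertices outside $E_j\cup E_k$ are private to the third hyperedge, so deleting any of them destroys that hyperedge. Neither deletion nor contraction ever increases the number of minimal sets, so you are left with at most two. Every access structure with two minimal sets is ideal (see below), hence a matroid port, so no forbidden minor can appear this way.

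The missing idea is to keep all three hyperedges and \emph{contract} the private block of the hyperedge that meets both others outside the centre. In Table 1 this is $E_2=A_1A_2A_3A_5$ with $A_2,A_3\neq\varnothing$, which is exactly what classes 3 and 4 have and class 2 lacks. Contract $A_1$, $A_5$ (and $A_7$ for $\Gamma_4$), together with all but one participant $p_j$ of each of $A_2,A_3,A_4,A_6$. The resulting minor is $\{p_2p_4,\,p_2p_3,\,p_3p_6\}$, the path $P_4$, and minor-monotonicity of $\rho^*$ together with $\rho^*(P_4)=2/3$ finishes the upper bound.

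This route also avoids your misquotation of the gap theorem. Mart\'{\i}-Farr\'e and Padr\'o bound access structures that are not matroid ports, not all non-ideal ones; non-ideal V\'amos ports have $\rho^*>2/3$.

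Your alternative entropy argument is wrong as stated. One fixed $x$ cannot make a chain of length $\ge 2$ independent, because the steps would require both $A_1\cup\{x\}\in\Gamma$ and $A_1\cup\{x\}\notin\Gamma$. Moreover, a length-3 sequence made independent by a single participant would give $H(x)\ge 3H(S)$, not $\tfrac32 H(S)$. The correct version uses the sequence $\{p_4\}\subset\{p_4,p_6\}$, made independent by $X_1=\{p_2\}$ and $X_2=\{p_3\}$. Since $\{p_2,p_3\}$ is authorized, $H(p_2)+H(p_3)\ge H(p_2p_3\mid S)+H(S)\ge 3H(S)$.

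For the lower bound, your cyclic candidate $\Gamma^{(t)}=\{E_t,E_{t+1}\}$ works unconditionally, so the hedge and the fallback are unnecessary. The fallback would in fact fail: using each hyperedge twice as its own Shamir component puts the centre in six components and gives only rate $1/3$.

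The fact you need is that every structure with two minimal sets $E,E'$ is ideal. Let $C=E\cap E'$, which is nonempty because it contains the centre. Pick a random $r$ and share it additively over $C$. Then share $s-r$ additively over $E\setminus C$ and, independently, over $E'\setminus C$.

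With this, the three cyclic pairs cover every hyperedge exactly twice. There are only three components, so no participant can lie in more than three, and the bookkeeping you flagged as the main obstacle is automatic. Stinson's construction with component secrets $s_1$, $s_2$, $s_1+s_2$ then gives rate $2/3$. No case analysis is needed: this shows $\rho^*\ge 2/3$ for every access structure with three minimal sets.
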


\subsubsection{QAS composed of representative elements}
In the construction of QSS schemes, optimal QAS are advantageous in resource consumption reduction, yet they do not necessarily facilitate practical implementation of the scheme. In what follows, we introduce an improved structure for hyperstar QAS, i.e., the QAS constructed from representative elements. Such structures can be obtained by taking the restriction defined in Section 2.2 of the original access structure with respect to the set formed by the representative element of each hyperedge.
For instance, for $\Gamma_1=\{A_1A_2,A_1A_3,A_1A_4\}$, select the first participant $P_1^{(i)}$ from each $A_i$ with $i\in[4]$, then we can obtain
\begin{align}
	B=\{P_1^{(1)},P_1^{(2)},P_1^{(3)},P_1^{(4)}\},
\end{align}
	and the representative access structure of $\Gamma_1$ is
	\[
	\Gamma|_B=\{P_1^{(1)}P_1^{(2)},P_1^{(1)}P_1^{(3)},P_1^{(1)}P_1^{(4)}\},
	\]
	denoted as $R_1$.
	
The representative QAS corresponding to four hyperstar QAS with three hyperedges are listed in Table 1.

\begin{table}[H]
	\caption{The representative QAS corresponding to four hyperstar with three hyperedges QAS .\label{tab2}}
	\resizebox{138mm}{!}{
	\begin{tabular}{c l l}
		\toprule
		&	Hyperstar QAS	&	Representative element QAS	\\
		\midrule
		1	&	$\Gamma_1=\{A_1A_2,A_1A_3,A_1A_4\}$	&	$R_1=\{P_1^{(1)}P_1^{(2)}, P_1^{(1)}P_1^{(3)}, P_1^{(1)}P_1^{(4)}\}$	\\ \hline
		2	&	$\Gamma_2=\{A_1A_2A_4,A_1A_2A_5,A_1A_3\}$	&	$R_2=\{P_1^{(1)}P_1^{(2)}P_1^{(4)}, P_1^{(1)}P_1^{(2)}P_1^{(5)}, P_1^{(1)}P_1^{(3)}\}$\\ \hline
		3	&	$\Gamma_3=\{A_1A_2A_4,A_1A_2A_3A_5,A_1A_3A_6\}$	&	$R_3=\{P_1^{(1)}P_1^{(2)}P_1^{(4)}, P_1^{(1)}P_1^{(2)}P_1^{(3)}P_1^{(5)}, P_1^{(1)}P_1^{(3)}P_1^{(6)}\}$\\ \hline
		4	&	$\Gamma_4=\{A_1A_2A_4A_7,A_1A_2A_3A_5,A_1A_3A_6A_7\}$	&	$R_4=\{P_1^{(1)}P_1^{(2)}P_1^{(4)}P_1^{(7)}, P_1^{(1)}P_1^{(2)}P_1^{(3)}P_1^{(5)}, P_1^{(1)}P_1^{(3)}P_1^{(6)}P_1^{(7)}\}$\\
		\bottomrule
	\end{tabular}
	}
\end{table}

It is easy to verify that all four representative access structures $R_i\ (i\in[4])$ are valid QAS. For $R_1$, by calculation, we obtain that
	\[
	\mathcal{A}_2^{(1)}=\{P_1^{(1)},P_1^{(2)}P_1^{(3)}P_1^{(4)}\}.
	\]
   \[
	\mathcal{A}_1^{(1)}=\{P_1^{(2)},P_1^{(3)},P_1^{(4)},P_1^{(2)}P_1^{(3)},P_1^{(2)}P_1^{(4)},P_1^{(3)}P_1^{(4)}\}.
	\]
Set $\Gamma_I^{(1)}=\mathcal{A}_2^{(1)}, \Gamma_F^{(1)}=\mathcal{A}_1^{(1)}$, then it is readily verified that $\Gamma$ satisfies the following two conditions: (i) $B\in\Gamma_I$ if and only if $\overline{B}\in\Gamma_I$; (ii) $ B\in\Gamma_F$ if and only if $\overline{B}\in\Gamma$. Thus QAS $R_1$ satisfies the conditions of Theorem 2.2, so it can be realized by a pure-state encoding scheme. However, Lemma 2.2 indicates that this access structure cannot be realized via a PQSS scheme.

Similar computation for all $R_i$ gives $\mathcal{A}_2^{(i)}\neq \varnothing$ for $i\in [4]$. Thus each $R_i$ contains intermediate sets. Combining with Theorem 2.2 we have the following conclusion:
	
\begin{theorem}
		QAS $R_i\ (i\in[4])$ formed by representative elements of hyperstar with three hyperedges QAS is a non-maximal QAS, and this quantum access structure is realizable with a pure-state encoding scheme.
\end{theorem}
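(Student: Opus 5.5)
The plan is to reduce the statement to the combinatorial hypotheses of Theorem 2.2 and then invoke that theorem. I will not compute $\mathcal{A}_1^{(i)}$ and $\mathcal{A}_2^{(i)}$ separately for each $i\in[4]$. Instead I will use one structural feature shared by all four representative structures in Table 1: every minimal authorized set of $R_i$ contains the centre $c:=P_1^{(1)}$, and every minimal authorized set has at least two elements. So the argument runs uniformly in $i$, with Table 1 used only to read off these two facts.

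\textbf{Step 1 ($R_i$ is a QAS).} Any two authorized sets of $R_i$ contain $c$, so they intersect. By the no-cloning criterion of \cite{ref2} recalled at the start of Section 2.1, $R_i$ is a valid QAS.

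\textbf{Step 2 (intermediate sets exist, so $R_i$ is non-maximal).} The singleton $\{c\}$ is unauthorized, because every minimal authorized set has size at least $2$. It meets every authorized set, because each one contains $c$. Hence $\{c\}\in\mathcal{A}_2^{(i)}$ and $\mathcal{A}_2^{(i)}\neq\varnothing$. By Lemma 2.2, $R_i$ therefore cannot be realized by a PQSS scheme. Setting $\Gamma_I^{(i)}:=\mathcal{A}_2^{(i)}$ and $\Gamma_F^{(i)}:=\mathcal{A}_1^{(i)}$, consistent with Theorem 2.3, we obtain $2^P=R_i\cup\Gamma_F^{(i)}\cup\Gamma_I^{(i)}$ with $\Gamma_I^{(i)}\neq\varnothing$, so $R_i$ is non-maximal.

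\textbf{Step 3 (verify conditions (i) and (ii) of Theorem 2.2).} I expect this step to need the most care. I will show that the two complement conditions hold for the partition $\mathcal{A}_1\cup\mathcal{A}_2$ of \emph{any} QAS $\Gamma$ with monotone closure, using only set manipulations.
\begin{itemize}
\item[(ii), forward.] If $A\in\mathcal{A}_1$, some authorized $B$ satisfies $A\cap B=\varnothing$. Then $B\subseteq\overline{A}$, and monotonicity gives $\overline{A}\in\Gamma$.
\item[(ii), backward.] If $\overline{A}\in\Gamma$, then $A$ is unauthorized, since two disjoint authorized sets would violate no-cloning. Also $A$ is disjoint from the authorized set $\overline{A}$, so $A\in\mathcal{A}_1$.
\item[(i).] Let $A\in\mathcal{A}_2$. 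Then $\overline{A}$ is not authorized, since otherwise $A$ would miss an authorized set. Moreover, $\overline{A}$ meets every authorized $B$, since otherwise $B\subseteq A$ and $A$ would be authorized. Hence $\overline{A}\in\mathcal{A}_2$. The converse follows by symmetry.
\end{itemize}
The subtle point is to use monotonicity and the no-cloning property in the correct direction. As a sanity check, I would compare the result with the explicit lists for $R_1$ given before the statement, for example $\overline{\{c\}}=P_1^{(2)}P_1^{(3)}P_1^{(4)}\in\mathcal{A}_2^{(1)}$.

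\textbf{Step 4 (conclude).} With conditions (i) and (ii) verified for each $R_i$, Theorem 2.2 supplies a pure-state isometric encoding $\phi$ realizing $R_i$. In this encoding, authorized sets recover the secret, sets in $\mathcal{A}_1^{(i)}$ get no information, and sets in $\mathcal{A}_2^{(i)}$ get partial information. Combined with Step 2, this shows that each $R_i$, $i\in[4]$, is a non-maximal QAS realizable by a pure-state encoding scheme.
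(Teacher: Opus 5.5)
Your argument is correct. It follows the paper's skeleton: set $\Gamma_I^{(i)}:=\mathcal{A}_2^{(i)}$ and $\Gamma_F^{(i)}:=\mathcal{A}_1^{(i)}$, check the two complement conditions of Theorem 2.2, invoke it, and read off non-maximality (and, via Lemma 2.2, non-perfectness) from $\mathcal{A}_2^{(i)}\neq\varnothing$.

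What differs is how the checks are done. The paper enumerates $\mathcal{A}_1^{(1)}$ and $\mathcal{A}_2^{(1)}$ for $R_1$, says the complement conditions are ``readily verified'', and covers $R_2,R_3,R_4$ by ``similar computation''. You replace this with two uniform facts. First, the common centre $c=P_1^{(1)}$ gives a single witness $\{c\}\in\mathcal{A}_2^{(i)}$ for every $i$. Second, you prove the complement conditions for an arbitrary QAS using only monotonicity and pairwise intersection. That lemma is right; it proves Lemma 2.1 without any pure-state hypothesis. It also shows that the hypotheses of Theorem 2.2 hold automatically under this identification, so the only $R_i$-specific content of the theorem is non-maximality.

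Your route gives uniformity in $i$ and removes the unchecked ``similar computation''. The paper's enumeration gives the explicit list of intermediate sets (for $R_1$, exactly $P_1^{(1)}$ and $P_1^{(2)}P_1^{(3)}P_1^{(4)}$), i.e.\ which coalitions obtain partial information.

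Your generality also makes clear that realizability rests entirely on Theorem 2.2, which you and the paper both use as a black box. The paper's proof of Theorem 2.2 writes $\Gamma$ as the intersection of exactly two maximal QAS, and this is impossible for $R_3$ and $R_4$. Write $j$ for $P_1^{(j)}$. The sets $\{1\}$, $\{1,2,3\}$, $\{2,3\}$ and $\overline{\{1\}}$ all lie in $\mathcal{A}_2^{(3)}$ (and in $\mathcal{A}_2^{(4)}$), and they are linked by $\{1\}\subset\{1,2,3\}\supset\{2,3\}\subset\overline{\{1\}}$. Suppose $R_3=\Gamma^{(1)}\cap\Gamma^{(2)}$ with both factors maximal. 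Self-duality gives $B\in\Gamma^{(2)}$ iff $B\notin\Gamma^{(1)}$ for every $B\in\mathcal{A}_2^{(3)}$. Monotonicity of both factors then forces $\Gamma^{(1)}\cap\mathcal{A}_2^{(3)}$ to be closed both upward and downward inside $\mathcal{A}_2^{(3)}$. Following the chain, it contains both or neither of $\{1\}$ and $\overline{\{1\}}$, which contradicts self-duality of $\Gamma^{(1)}$.

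The statement of Theorem 2.2 can still be saved. For each $B\in\mathcal{A}_2$, take a maximal QAS containing $\Gamma$ and $\overline{B}$; one exists because $\overline{B}$ meets every authorized set. $\Gamma$ is the intersection of these $r$ structures, and the construction goes through with $\tau$ replaced by $|x\rangle\mapsto|x\rangle^{\otimes r}$. This is not a gap relative to the paper, but a self-contained proof would have to supply this step.
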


Theorem 2.4 shows that after transforming large-participant QAS $\Gamma_i$ into lightweight one $R_i$. However the obtained QAS $R_i$ remains unrealizable with PQSS
by Lemma 2.2.

\section{Properties of quantum entangled states}

\begin{lemma}\cite{ref32} (Schmidt Decomposition Theorem)
	Let $\vert\psi\rangle$ be a pure state of the composite quantum system $AB$.
	Then there exist an orthonormal basis $\{\vert i_A\rangle\}$ of subsystem $A$ and an orthonormal basis $\{\vert i_B\rangle\}$ of subsystem $B$, such that
\begin{align}
	\vert\psi\rangle = \sum_{i=1}^{l} \lambda_i \vert i_A\rangle\vert i_B\rangle,
\end{align}
	where $\lambda_i$ are nonnegative real numbers satisfying
\begin{align}
	\sum_{i=1}^{l} \lambda_i^2 = 1.
\end{align}
	The number of nonzero elements in $\{\lambda_i\}$ is denoted by $k$, which is called the {\em Schmidt number} of $\vert\psi\rangle$.
	Furthermore, $\vert\psi\rangle$ is an entangled state if and only if $k\ge 2$.
\end{lemma}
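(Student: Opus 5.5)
The plan is to prove the decomposition by the singular value decomposition of the coefficient matrix, and then to read off the entanglement criterion from the rank of that matrix. Fix arbitrary orthonormal bases $\{\vert j\rangle\}_{j=1}^{d_A}$ of subsystem $A$ and $\{\vert k\rangle\}_{k=1}^{d_B}$ of subsystem $B$, and expand $\vert\psi\rangle=\sum_{j,k}a_{jk}\vert j\rangle\vert k\rangle$. The coefficients form a $d_A\times d_B$ complex matrix $M=(a_{jk})$. By the singular value decomposition, $M=UDV$ with $U$, $V$ unitary and $D$ a rectangular diagonal matrix. Its diagonal entries $\lambda_1,\dots,\lambda_l$, with $l=\min(d_A,d_B)$, are nonnegative reals.

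Substituting gives $a_{jk}=\sum_i u_{ji}\lambda_i v_{ik}$. I then define $\vert i_A\rangle=\sum_j u_{ji}\vert j\rangle$ and $\vert i_B\rangle=\sum_k v_{ik}\vert k\rangle$. Unitarity of $U$ and $V$ makes these families orthonormal, and regrouping the sum yields $\vert\psi\rangle=\sum_{i=1}^{l}\lambda_i\vert i_A\rangle\vert i_B\rangle$. The normalization $\sum_i\lambda_i^2=1$ follows from $\langle\psi\vert\psi\rangle=1$ computed in these orthonormal product bases, or equivalently from $\operatorname{tr}(MM^\dagger)=1$.

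For the entanglement criterion, I will first show that the Schmidt number $k$ is intrinsic, i.e.\ independent of the decomposition chosen. The reduced state is $\rho_A=\operatorname{tr}_B\vert\psi\rangle\langle\psi\vert=\sum_i\lambda_i^2\vert i_A\rangle\langle i_A\vert$. Hence $k=\operatorname{rank}\rho_A=\operatorname{rank}M$, which does not depend on any choices. Now suppose $k=1$. Then $\vert\psi\rangle=\vert 1_A\rangle\vert 1_B\rangle$ is a product state, so it is not entangled. Conversely, suppose $\vert\psi\rangle=\vert\alpha\rangle\vert\beta\rangle$ is a product state. Then $a_{jk}=\alpha_j\beta_k$, so $M$ has rank $1$, and therefore $k=1$. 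Taking the contrapositive of both directions, $\vert\psi\rangle$ is entangled if and only if $k\ge 2$.

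The main obstacle is this well-definedness step. A priori another decomposition might display a different number of nonzero coefficients, which would make the criterion ambiguous. Identifying $k$ with $\operatorname{rank}M$, equivalently with the rank of $\rho_A$, settles this, and I expect that to be the only non-routine point of the proof.
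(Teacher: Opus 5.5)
The paper does not prove this lemma; it quotes it from \cite{ref32}. Your argument is the standard textbook proof: expand in fixed product bases, take the singular value decomposition $M=UDV$ of the coefficient matrix, and absorb $U$ and $V$ into new bases. It is correct.

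Because you use the rectangular SVD with square unitaries $U$ and $V$, you obtain genuine orthonormal bases of both factors, of sizes $d_A$ and $d_B$. Only the first $l=\min(d_A,d_B)$ vectors of each enter the sum, so the case $d_A\neq d_B$ is covered directly.

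Your treatment of the entanglement criterion, which goes beyond the bare decomposition theorem, is also sound. Two small points should be stated explicitly. First, normalization forces $k\ge 1$, and this is what turns ``product iff $k=1$'' into ``entangled iff $k\ge 2$''. Second, when $k=1$ the single nonzero coefficient equals $1$.

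Keep your identification $k=\operatorname{rank}\rho_A$, valid for every decomposition with orthonormal families. The paper tacitly relies on exactly this in Theorem 3.1 and after Eq.~(10), where the Schmidt number is read off from one particular decomposition.
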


\begin{theorem}
	Let $\vert\psi\rangle$ be an entangled state on bipartite system $AB$, and let $U_A, U_B$ be unitary operators on subsystems $A$ and $B$, respectively.
	Then $(U_A\otimes U_B)\vert\psi\rangle$ is also an entangled state and has the same Schmidt number as $\vert\psi\rangle$.
\end{theorem}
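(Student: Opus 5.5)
The plan is to reduce the statement to Lemma 3.1 (Schmidt decomposition). We first show that $(U_A\otimes U_B)\vert\psi\rangle$ has a Schmidt decomposition with exactly the same coefficients $\lambda_i$ as $\vert\psi\rangle$. Entanglement of the new state then follows from the characterization ``entangled iff Schmidt number $k\ge 2$''.

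Concretely, start from the Schmidt form $\vert\psi\rangle=\sum_{i=1}^{l}\lambda_i\vert i_A\rangle\vert i_B\rangle$ given by Lemma 3.1, with $\{\vert i_A\rangle\}$ and $\{\vert i_B\rangle\}$ orthonormal. Apply $U_A\otimes U_B$ and use linearity of the tensor product of operators:
\[
(U_A\otimes U_B)\vert\psi\rangle=\sum_{i=1}^{l}\lambda_i\,(U_A\vert i_A\rangle)\otimes(U_B\vert i_B\rangle).
\]
Set $\vert i_A'\rangle=U_A\vert i_A\rangle$ and $\vert i_B'\rangle=U_B\vert i_B\rangle$. Since unitaries preserve inner products, $\langle i_A'\vert j_A'\rangle=\langle i_A\vert U_A^\dagger U_A\vert j_A\rangle=\delta_{ij}$, and likewise for $B$. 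Hence $\{\vert i_A'\rangle\}$ and $\{\vert i_B'\rangle\}$ are again orthonormal, and they are bases because a unitary maps a basis onto a basis. The displayed expression is therefore a Schmidt decomposition of $(U_A\otimes U_B)\vert\psi\rangle$, with the same nonnegative coefficients $\lambda_i$ satisfying $\sum_i\lambda_i^2=1$.

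The one point that needs care, and which I expect to be the main obstacle, is that the Schmidt number must be well defined: it has to be independent of the particular decomposition chosen. I would settle this by identifying the Schmidt number with the rank of the reduced density operator. Taking the partial trace of the state above over $B$ gives $\rho_A'=\sum_i\lambda_i^2\vert i_A'\rangle\langle i_A'\vert$. This equals $U_A\rho_A U_A^\dagger$, where $\rho_A=\sum_i\lambda_i^2\vert i_A\rangle\langle i_A\vert$ is the reduced state of $\vert\psi\rangle$. The rank of $\rho_A'$ is therefore the number of nonzero $\lambda_i$, independent of the decomposition, and rank is invariant under unitary conjugation. The two states thus share the Schmidt number $k$.

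Finally, $\vert\psi\rangle$ is entangled, so by Lemma 3.1 we have $k\ge 2$. The transformed state also has Schmidt number $k\ge2$, so by Lemma 3.1 again it is entangled.
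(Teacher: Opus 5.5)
Your proof is correct and follows the paper's route: you push $U_A\otimes U_B$ through the Schmidt decomposition and use the fact that unitaries preserve orthonormality, so the new decomposition has the same coefficients $\lambda_i$. You also go slightly further than the paper by checking that the Schmidt number is well defined, identifying it with the rank of the reduced state $\rho_A$, and by explicitly invoking Lemma 3.1 to conclude entanglement; the paper leaves both points implicit.
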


\begin{proof}
	Suppose the Schmidt number of $\vert\psi\rangle$ is $k$.
	Then there exist orthonormal bases $\{\vert i_A\rangle\}$ and $\{\vert i_B\rangle\}$ of $A$ and $B$, together with $\lambda_i>0$, such that
	\[
	\vert\psi\rangle = \sum_{i=1}^{k} \lambda_i \vert i_A\rangle\vert i_B\rangle,
	\qquad
	\sum_{i=1}^{k}\lambda_i^2 = 1.
	\]
	Since $U_A$ and $U_B$ are unitary transformations,
	$\{U_A\vert i_A\rangle\}$ and $\{U_B\vert i_B\rangle\}$ are still orthonormal bases.
	We have
	\[
	(U_A\otimes U_B)\vert\psi\rangle
	= \sum_{i=1}^{k} \lambda_i \,U_A\vert i_A\rangle\, U_B\vert i_B\rangle.
	\]
	So the Schmidt number of $(U_A\otimes U_B)\vert\psi\rangle$ is also $k$.
\end{proof}

In the following, let $|K\rangle = \alpha\vert0\rangle + \beta\vert1\rangle$ be a single-qubit state.
For simplicity, assume $\alpha>0,\beta>0$ and $\alpha^2+\beta^2=1$.
Consider an $m$-partite entangled state with $m\ge 2$:

\begin{equation}
	\lvert \varphi \rangle = \frac{1}{\sqrt{2}} \big( \lvert 00\cdots0 \rangle + \lvert 11\cdots1 \rangle \big).
	\tag{8}
	\label{eq:ghz-like}
\end{equation}

Then we have
\begin{align}
	|K\rangle\otimes|\varphi\rangle
	&=\left(\alpha|0\rangle+\beta|1\rangle\right)
	(\frac{1}{\sqrt{2}}\left(|00\cdots0\rangle+|11\cdots1\rangle\right) \notag\\
	&=\frac{1}{\sqrt{2}}\bigg(\alpha|00\cdots0\rangle+\alpha|01\cdots1\rangle
	+\beta|10\cdots0\rangle+\beta|11\cdots1\rangle\bigg) \tag{9}
\end{align}

We perform the CNOT operation on the first and second qubits in Eq.~(9), where the first qubit is the control qubit.
Then the state $|K\rangle\otimes|\varphi\rangle$ is transformed into the state $|\psi_1\rangle$, where
\begin{align}
	|\psi_1\rangle
	&=\frac{1}{\sqrt{2}}\bigg(\alpha|000\cdots0\rangle+\alpha|011\cdots1\rangle
	+\beta|110\cdots0\rangle+\beta|101\cdots1\rangle\bigg) \notag\\
	&=\left(\alpha|00\rangle+\beta|11\rangle\right)\frac{1}{\sqrt{2}}|00\cdots0\rangle
	+\left(\alpha|01\rangle+\beta|10\rangle\right)\frac{1}{\sqrt{2}}|11\cdots1\rangle \tag{10}
\end{align}
Here, $\alpha|00\rangle+\beta|11\rangle$ and $\alpha|01\rangle+\beta|10\rangle$ are two orthogonal states in
$C^4=C^2\otimes C^2$, and $|00\cdots0\rangle$ and $|11\cdots1\rangle$
are two orthogonal states in $(C^2)^{\otimes (m-1)}$.
Thus, Eq.~(10) is the Schmidt decomposition of $|\psi_1\rangle$ with Schmidt number 2.
By Lemma 3.1, $|\psi_1\rangle$ is an entangled state.

\section{PQSS Schemes on four types of hyperstar QAS}
\subsection{System definition}
In this section, we construct PQSS schemes for the four types of hyperstar QAS $\Gamma_i~(i\in[4])$ listed in Table 1. We take the representative access structure $R_i$ corresponding to $\Gamma_i$ as the auxiliary structure.
In this model, there exists a distributor Alice, a participant set $\mathcal{P}$, and an eavesdropper Eve over the quantum channel. All participants are assumed to be honest.

It should be noted that $P=\{P_1,P_2,\dots,P_n\}$. On the other hand, since
\[
P=\bigcup_{j=1}^{r}A_j,
\]
we further denote
\[
P=\bigcup_{j=1}^{r}\{P_{1}^{(j)},\dots,P_{n_j}^{(j)}\},
\]
where $n_1+n_2+\dots+n_r=n$, $r$ is the number of subsets of $\Gamma_i$ that contain the non-empty set $ A_j$.

The model belongs to a mixture network structure that is made up of classical access structure $\Gamma_i$ and QAS $R_i$. For simplicity, we use $(\Gamma_i, R_i)$ to represent the structure of this mixed model, where $i\in[4]$.
Furthermore, the classical perfect secret sharing scheme constructed on $\Gamma_j$ is consistent with the schemes described in Lemma 2.3 and Lemma 2.4, thus we assume that the minimal authorized subsets in $\Gamma_i$ have obtained the classical secret key $s$ according to this scheme, where $s\in F_p$, $F_p$ is a finite field, and $p$ is a sufficiently large prime number. The QSS scheme constructed on $R_i$ is implemented based on the scheme over $\Gamma_j,j\in[4] $. Specifically, the secret $s$ reconstructed by the minimal authorized subsets on $\Gamma_i$ is used to encrypt and decrypt the share quantum states on $R_i,i\in[4]$.
We assume that representative element $P_1^{(i)},i\in[r]$, can operate these quantum states via relevant unitary transformations, and has the ability to measure these particle states.

The main distribution rules of the proposed scheme are given as follows:

(1) Alice shares the classical secret $s$ in $\Gamma_i$ by adopting the classical secret sharing schemes in Lemma 2.3 and Lemma 2.4, where each participant $P_i~(i\in[n])$ in $P$ receives the corresponding classical share.
	
(2) Alice selects a classical bit string $s$, which can be regarded as a classical bit string composed of $0$ and $1$. Each pair of classical bits decides to perform the $I$, $X$, $Y$ or $Z$ transformation on the qubit, where $I,X,Y$ and $Z$ are Pauli operators. If the classical bit pairs are $00$, $01$, $10$ and $11$, the $I$, $X$, $Y$ and $Z$ transformations are applied, respectively. The quantum state $|\varphi\rangle$ is encrypted by via the above transformation, and the encrypted quantum state is denoted by $|\tilde{\varphi}\rangle$.

(3) The quantum state $|\tilde{\varphi}\rangle$ is shared in $R_i$ via the QSS scheme, and each participant in the representative element set obtains the corresponding particles from quantum state  $|\tilde{\varphi}\rangle$. It should be noted that we assume that the system configure $\Gamma_E$ detection entangled states $|\varphi\rangle$ defined in Eq. (8) whenever the system transmits a shared quantum state $|\tilde{\varphi}\rangle$, where $|\varphi\rangle$ contains $m$ particles,
 and needs to send the $m+1$ particles in $|\tilde{\varphi}\rangle$ to the participants in the minimal authorized subset. So when Alice distribute the particles of the shared quantum state $|\tilde{\varphi}\rangle$ to participants, these particles are interleaved among the particles from the $\Gamma_E$ detection quantum states.

The main reconstruction rules of the scheme are as follows:

(1) After eavesdropping detection, the participants in the minimal authorized subset collect the quantum shares.

(2) The quantum state $|\widetilde{\varphi}\rangle $ is decrypted with the classical key $s$.

\subsection{PQSS schemes on four types of hyperstar access atructures}
In this section, with reference to the conventions specified in the introduction of this paper, we elaborate on the distribution method of shares for QSS schemes defined over QAS, i.e., representative-element access structures—by specifying how quantum shares are distributed for each authorized subset. The distribution of classical shares adopts the classical secret sharing schemes in Lemma 2.3 and Lemma 2.4 to share the secret $s$ over $\Gamma_i$. It is assumed that the participants in the authorized subset have reconstructed the classical key $s$ with $s\in F_p$. Here we require $s\ge 993$ to ensure that the bit length $l$ of the binary representation $j_1j_2\cdots j_l$ of $s$ satisfies $l\ge 10$.

It should be noted that, in the classical secret sharing scheme over $\Gamma_i$, when Alice selects the secret $s$, it is guaranteed that at least four pairs in its binary representation $j_1j_2\cdots j_l$ are not equal to $00$, i.e. suppose \[
j_{i_1}j_{i_2},\;j_{i_3}j_{i_4},\;j_{i_5}j_{i_6},\;j_{i_7}j_{i_8}
\]
are ordered pairs selected from different positions of $j_1j_2\cdots j_l$, considering the encryption requirement of quantum states, we require that
\[
j_a j_b \in \{01,10,11\},
\]
where $j_a j_b \in\{j_{i_1}j_{i_2},\;j_{i_3}j_{i_4},\;j_{i_5}j_{i_6},\;j_{i_7}j_{i_8}\}$.
\subsubsection{PQSS scheme on $\Gamma_1=\{A_1A_2,A_1A_3,A_1A_4\}$}
First, we present the distribution of quantum shares and the recovery process of the quantum secret state $|K\rangle$ for the minimal authorized subset $A_1A_2$.

Let the quantum secret state be $|K\rangle=\alpha|0\rangle+\beta|1\rangle$, where $\alpha>0$, $\beta>0$, and $\alpha^2+\beta^2=1$. We take the bipartite entangled state for $|\varphi\rangle$ in Eq. (8), i.e.,
\[
|\varphi\rangle=\frac{1}{\sqrt{2}}\big(|00\rangle+|11\rangle\big).
\]
Alice performs the CNOT operation on the first and second qubits of the product state $|K\rangle\otimes|\varphi\rangle$, where the first qubit is the control qubit. Then $|K\rangle\otimes|\varphi\rangle$ evolves into the state $|\psi_1\rangle$, and according to Eq. (10), we have
\[
|\psi_1\rangle
= \left(\alpha|00\rangle+\beta|11\rangle\right)\frac{1}{\sqrt{2}}|0\rangle
+\left(\alpha|01\rangle+\beta|10\rangle\right)\frac{1}{\sqrt{2}}|1\rangle. \tag{11}
\]
It follows from the analysis of Eq. (10) that $|\psi_1\rangle$ is an entangled state.

(1) Distribution of quantum shares

(1.1) Alice selects the bit positions of ordered pairs in the binary representation $j_1j_2\cdots j_m$ of $s$, and secretly informs $P_1^{(1)},P_1^{(2)}$ of the corresponding positions of these ordered pairs. For example, the position of $j_{i_1}j_{i_2}$ is told to $P_1^{(1)}$, and the position of $j_{i_3}j_{i_4}$ is told to $P_1^{(2)}$. Since the representative participants $P_1^{(1)},P_1^{(2)}$ have reconstructed the secret $s$, they can obtain $j_{i_1}j_{i_2}$ and $j_{i_3}j_{i_4}$ from $s$, respectively. According to the system model in this section, $P_1^{(1)}$ and $P_1^{(2)}$ can determine the Pauli operators $\sigma_1$ and $\sigma_2$ from $j_{i_1}j_{i_2}$ and $j_{i_3}j_{i_4}$, respectively, where $\sigma_i\in\{X,Y,Z\},\;i\in\{2\}$.

(1.2) Alice applies the operator $\sigma_1\otimes I\otimes\sigma_2$ sequentially to the first to the third particles (in order from left to right) in $|\psi_1\rangle$, and obtains
\[
|\tilde{\varphi}\rangle = (\sigma_1\otimes I\otimes\sigma_2)|\psi_1\rangle.
\]
She sends the first and second particles of $|\tilde{\varphi}\rangle$ to $P_1^{(1)}$, and the third particle to $P_1^{(2)}$. When transmitting the three particles of $|\tilde{\varphi}\rangle$ to the participants in the minimal authorized subset, it is necessary to prepare $\Gamma_E$ detection entangled states $|\varphi\rangle$ defined in Eq. (8), where each $|\varphi\rangle$ contains two particles. When distributing to participant $P_1^{(i)}, i\in[2]$, the particles of the shared quantum state $|\tilde{\varphi}\rangle$ are interleaved among these  particles $\Gamma_E$ detection entangled states $|\varphi\rangle$.

$P_1^{(1)}$ and $P_1^{(2)}$ measure the particle states at the relevant positions informed by Alice. If their measurement results are identical (both $1$ or both $0$), the protocol proceeds to the next step; otherwise, the process will restart from (1.1).

(2) Reconstruction of quantum secret state $|K\rangle=\alpha|0\rangle+\beta|1\rangle$

\vspace{0.5em}
(2.1) When $P_1^{(1)}$ and $P_1^{(2)}$ receive their particles, respectively, $P_1^{(1)}$ acts on its first received particle with $\sigma_1$, and $P_1^{(2)}$ acts on its received particle with $\sigma_2$. Meanwhile, $P_1^{(2)}$ informs $P_1^{(1)}$ that he has completed the operation.

\vspace{0.5em}
(2.2) $P_1^{(1)}$ performs the CNOT operation on its first particle and second particle, where the first particle serves as the control qubit.

\vspace{0.5em}
The distribution of quantum shares and the recovery process of the quantum secret state $|K\rangle$ for participants on the minimal authorized subsets $A_1A_3,A_1A_4$ are similar to those on $A_1A_2$.

\subsection*{4.2.2 PQSS scheme for $\Gamma_2=\{A_1A_2A_4,A_1A_2A_5,A_1A_3\}$}
The distribution of quantum shares and the recovery process of the quantum secret state $|K\rangle$ for participants on $A_1A_3$ in $\Gamma_2$ are similar to the corresponding case of $A_1A_2$ in $\Gamma_1$. The following presents the distribution of quantum shares and the recovery process of the quantum secret state $|K\rangle$ on the minimal authorized subset $A_1A_2A_4$.

Alice sets \[
| \varphi \rangle = \dfrac{1}{\sqrt{2}}\big(| 000 \rangle + | 111 \rangle\big).
\] for the entangled state $| \varphi \rangle$ in Eq. (8).

Alice performs the CNOT operation on the first and second qubits of the direct product state $| K \rangle \otimes |\varphi \rangle$, where the first qubit is the control qubit. At this time, $| K \rangle \otimes |\varphi \rangle$ evolves into the state $| \psi_1 \rangle$, and from Eq. (10) we have
\[
|\psi_1 \rangle
= \big(\alpha | 000 \rangle + \beta | 110 \rangle\big)\dfrac{1}{\sqrt{2}} | 0 \rangle
+ \big(\alpha | 011 \rangle + \beta | 101 \rangle\big)\dfrac{1}{\sqrt{2}} | 1 \rangle. \tag{12}
\]
From the analysis of Eq. (10), it can be known that $| \psi_1 \rangle$ is an entangled state.

\textbf{(1) Distribution of quantum shares}

\vspace{0.5em}
(1.1) Alice selects the bit values of ordered pairs in the binary representation $j_1j_2\cdots j_m$ of the secret $s$, and secretly informs $P_1^{(1)},P_1^{(2)},P_1^{(4)}$ of the corresponding positions of these ordered pairs. Since the representative participants $P_1^{(1)},P_1^{(2)}$ have reconstructed the secret $s$, they can deduce $j_{i_1}j_{i_2}$, $j_{i_3}j_{i_4}$ and $j_{i_5}j_{i_6}$ from $s$. According to the system model in this section, $P_1^{(1)},P_1^{(2)},P_1^{(4)}$ can determine the Pauli operators $\sigma_1,\sigma_2,\sigma_3$ from $j_{i_1}j_{i_2}$, $j_{i_3}j_{i_4}$, $j_{i_5}j_{i_6}$ respectively, where $\sigma_i\in\{X,Y,Z\},\ i=1,2,3$.

\vspace{0.5em}
(1.2) Alice applies the operator $\sigma_1\otimes I\otimes \sigma_2\otimes \sigma_3$ sequentially to the first to fourth particles in $|\psi_1\rangle$ in left-to-right order. After the operation, we have $(\sigma_1\otimes I\otimes \sigma_2\otimes \sigma_3)|\psi_1\rangle=|\tilde{\varphi}\rangle$. Alice sends the first and second particles in $|\tilde{\varphi}\rangle$ to $P_1^{(1)}$, the third particle to $P_1^{(2)}$, and the fourth particle to $P_1^{(4)}$.

When transmitting the four particles of $|\tilde{\varphi}\rangle$ to the participants in the minimal authorized subset, it is necessary to prepare $\Gamma_E$ detection entangled states $|\varphi\rangle$ defined in Eq. (8), where each $|\varphi\rangle$ contains three particles. When distributing to participant $P_1^{(i)}$ ($i\in\{1,2,4\}$), the particles of the shared quantum state $|\tilde{\varphi}\rangle$ are interleaved among these particles from $\Gamma_E$ detection entangled states $|\varphi\rangle$.

$P_1^{(1)}$, $P_1^{(2)}$ and $P_1^{(4)}$ measure the particle states at the relevant positions informed by Alice. If their measurement results are identical (both $0$ or both $1$), they proceed to the next step; otherwise, the process will restart from (1.1).

\vspace{0.6em}
\textbf{(2) Reconstruction of the secret quantum state}
$|K\rangle=\alpha|0\rangle+\beta|1\rangle$

\vspace{0.4em}
(2.1) $P_1^{(1)}$, $P_1^{(2)}$ and $P_1^{(4)}$ receive the corresponding particles respectively. $P_1^{(1)}$ acts on its first received particle with $\sigma_1$, while $P_1^{(2)}$ and $P_1^{(4)}$ act on their respective received particles with $\sigma_2$ and $\sigma_3$. Meanwhile, $P_1^{(2)}$ and $P_1^{(4)}$ inform $P_1^{(1)}$ that they have completed their operations.

\vspace{0.4em}
(2.2) $P_1^{(1)}$ performs the CNOT operation on its first and second particles, with the first particle taken as the control qubit.

\vspace{0.4em}
The distribution of quantum shares and the recovery process of the secret quantum state $|K\rangle$ for participants on the minimal authorized subset $A_1A_2A_5$ are similar to those on $A_1A_2A_4$.

\subsection*{4.2.3 PQSS scheme for $\Gamma_3=\{A_1A_2A_4,A_1A_2A_3A_5,A_1A_3A_6\}$}
The distribution of quantum shares and the recovery process of the secret quantum state $|K\rangle$ for participants on $A_1A_2A_4$ and $A_1A_3A_6$ in $\Gamma_3$ are similar to the case of $A_1A_2A_4$ in $\Gamma_2$. The following presents the distribution of quantum shares and the recovery process of $|K\rangle$ for the minimal authorized subset $A_1A_2A_3A_5$.

Alice Takes the four-qubit entangled state for $|\varphi\rangle$ in Eq. (8), namely
\[
|\varphi\rangle=\frac{1}{\sqrt{2}}\big(|0000\rangle+|1111\rangle\big).
\]
Alice performs the CNOT operation on the first and second qubits of the direct product state $|K\rangle\otimes|\varphi\rangle$, where the first qubit is the control qubit. At this time, $|K\rangle\otimes|\varphi\rangle$ evolves into the state $|\psi_1\rangle$. According to Eq. (10), we obtain
\[
|\psi_1\rangle
=\frac{1}{\sqrt{2}}\Big[\big(\alpha|0000\rangle+\beta|1100\rangle\big)|0\rangle
+\big(\alpha|0111\rangle+\beta|1011\rangle\big)|1\rangle\Big]. \tag{13}
\]
From the analysis of Eq. (10), $|\psi_1\rangle$ is an entangled state.

\textbf{(1) Distribution of quantum shares}

(1.1) Alice selects the bit values of ordered pairs in the binary representation $j_1j_2\cdots j_m$ of the secret $s$, and secretly informs $P_1^{(1)},P_1^{(2)},P_1^{(3)},P_1^{(5)}$ of the corresponding positions of these ordered pairs. Let these ordered pairs be $j_{i_1}j_{i_2}$, $j_{i_3}j_{i_4}$, $j_{i_5}j_{i_6}$ and $j_{i_7}j_{i_8}$, respectively. Since the representative participants $P_1^(i)$ have reconstructed the secret $s$, they can obtain $j_{i_1}j_{i_2}$, $j_{i_3}j_{i_4}$, $j_{i_5}j_{i_6}$ and  $j_{i_7}j_{i_8}$ from $s$. According to the system model of this section, $P_1^{(1)},P_1^{(2)},P_1^{(3)},P_1^{(5)}$ can determine the Pauli operators $\sigma_1,\sigma_2,\sigma_3,\sigma_4$ from $j_{i_1}j_{i_2}$, $j_{i_3}j_{i_4}$, $j_{i_5}j_{i_6}$, $j_{i_7}j_{i_8}$, where $\sigma_i\in\{X,Y,Z\}$, $i\in [4]$.

\vspace{0.4em}
(1.2) Alice applies the operator $\sigma_1\otimes I\otimes \sigma_2\otimes \sigma_3\otimes \sigma_4$ sequentially to the first to fifth particles in $|\psi_1\rangle$ in left-to-right order. After the operation, we have
\[
(\sigma_1\otimes I\otimes \sigma_2\otimes \sigma_3\otimes \sigma_4)|\psi_1\rangle = |\tilde{\varphi}\rangle.
\]
Alice sends the first and second particles in $|\tilde{\varphi}\rangle$ to $P_1^{(1)}$, and the third, fourth and fifth particles to $P_1^{(2)}$, $P_1^{(3)}$ and $P_1^{(5)}$, respectively.

When transmitting the five particles of $|\tilde{\varphi}\rangle$  to the participants in the minimal authorized subset, it is necessary to prepare $\Gamma_E$ detection entangled states $|\varphi\rangle$ defined in Eq. (8), where each $|\varphi\rangle$ contains four particles. When distributing to participant $P_1^{(i)}$ with $i\in\{1,2,3,5\}$, the particles of the shared quantum state $|\tilde{\varphi}\rangle$ are interleaved among these particles from $\Gamma_E$ detection entangled states $|\varphi\rangle$.

$P_1^{(1)},P_1^{(2)},P_1^{(3)},P_1^{(5)}$ measure the particle states at the relevant positions informed by Alice. If their measurement results are identical (both $0$ or both $1$), they proceed to the next step; otherwise, the process will restart from (1.1).

\vspace{0.6em}
\textbf{(2) Reconstruction of the secret quantum state $|K\rangle = \alpha|0\rangle + \beta|1\rangle$}

\vspace{0.4em}
(2.1) $P_1^{(1)},P_1^{(2)},P_1^{(3)},P_1^{(5)}$ receive the corresponding particles respectively. $P_1^{(1)}$ acts on its first received particle with $\sigma_1$, while $P_1^{(2)},P_1^{(3)},P_1^{(5)}$ act on their respective received particles with $\sigma_2$, $\sigma_3$ and $\sigma_4$. Meanwhile, $P_1^{(2)},P_1^{(3)},P_1^{(5)}$ inform $P_1^{(1)}$ that they have completed their operations.

\vspace{0.4em}
(2.2) $P_1^{(1)}$ performs the CNOT operation on its first and second particles, where the first particle is taken as the control qubit.

\subsection*{4.2.4 QSS Scheme for $\Gamma_4=\{A_1A_2A_4A_7,A_1A_2A_3A_5,A_1A_3A_6A_7\}$}
The distribution of quantum shares and the recovery process of the quantum secret state $|K\rangle$ for participants of the three minimal authorized subsets in $\Gamma_4$ are similar to those of $A_1A_2A_3A_5$ in $\Gamma_3$, so it will not be elaborated here.

\begin{figure}[htbp]
	\centering
	\includegraphics[width=0.9\textwidth]{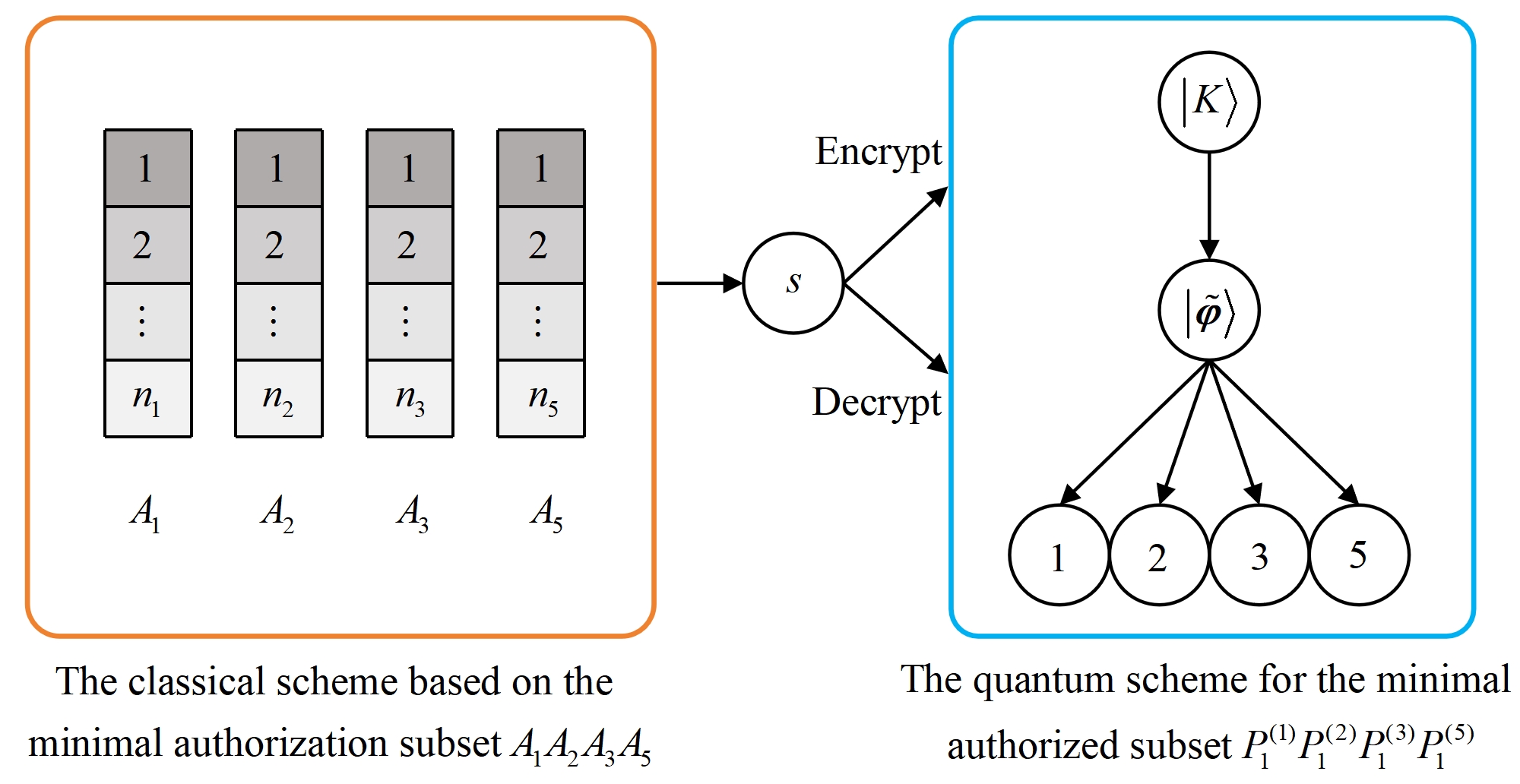}
	\caption{Flowchart of the proposed scheme}
	\label{fig:flowchart}
\end{figure}

Fig.1 presents an example of the PQSS scheme for the minimal authorized subset $A_1A_2A_3A_5$ based on the access structure $\Gamma_3$. The particles from entangled state  $|\tilde{\varphi}\rangle$  are then distributed to the representative participants $P_1^{(1)}P_1^{(2)}P_1^{(3)}P_1^{(5)}$  in the QAS $R_3$. The quantum secret $|K\rangle$ is encrypted by Pauli operators corresponding to classical keys $s$.  The representatives $P_1^{(1)}P_1^{(2)}P_1^{(3)}P_1^{(5)}$ can finally decrypt and recover the quantum secret state $|K\rangle$ by performing Pauli operator operations with the classical secret $s$.

\subsection{Correctness proof of this scheme}
\textbf{Theorem 4.1} The scheme constructed on the minimal access structure $\Gamma_i( i\in[4])$ is a PQSS scheme. That is, any minimal authorized subset in $\Gamma_i$ can recover the secret quantum state $|K\rangle$, while any unauthorized subset cannot obtain any information about the secret.

\vspace{0.4em}
\textbf{Proof}: For convenience of description, we first prove the correctness of the scheme constructed on the minimal access structure $\Gamma_1$. We first analyze the recovery of the quantum state $|K\rangle$ for the minimal authorized subset $A_1A_2$. Participants in $A_1A_2$ have obtained the classical key $s$ by adopting the scheme in Lemma 2.4, where $s\in F_p$. Therefore, in the representative access structure $R_1$, the two participants $P_1^{(1)}$ and $P_1^{(2)}$ in the minimal authorized subset $A_1A_3$ can obtain the classical key $s$.

Furthermore, participants $P_1^{(1)}$ and $P_1^{(2)}$ in the minimal authorized subset $A_1A_2$ use the classical key $s$ to decrypt the quantum state encrypted by Alice. Following (2) in Section 4.2.1, they can cooperatively recover the secret quantum state $|K\rangle$. This indicates that representative participants of minimal authorized subset from $R_1$ can finally obtain the secret quantum state $|K\rangle$. Similarly, it can be proved that participants in the minimal authorized subsets $A_1A_3$ and $A_1A_4$ can each recover the secret quantum state $|K\rangle$.

On the other hand, since the classical scheme based on $\Gamma_1$ is a perfect secret sharing scheme, any unauthorized subset not belonging to $\Gamma_1$ cannot obtain the secret $s$. Furthermore, even if participants in an unauthorized subset intercept partial information of the quantum shares during the quantum share distribution phase, they cannot obtain all quantum shares due to eavesdropping detection. Without the secret $s$, they are unable to decrypt the encrypted quantum state $|\tilde{\varphi}\rangle$. Hence, participants in any unauthorized subset of $\Gamma_1$ cannot acquire the secret quantum state $|K\rangle$.

The correctness proof of the schemes constructed on the minimal access structures $\Gamma_i$ ($i\in [2,4]$) is similar to that for $\Gamma_1$. Therefore, the proposed scheme is a PQSS scheme on the QAS $\Gamma_i( i\in[4])$.

\section{Conclusions}
Developing efficient QSS schemes remains a prominent research focus in quantum communication \cite{ref33,ref34}. Among these, single-photon schemes and quantum entangled-state schemes represent the two primary technical approaches for achieving efficient QSS. Single-photon technology effectively reduces the physical implementation complexity, enhances system stability and reliability, and overcomes inherent bottlenecks in entangled-state preparation and distribution \cite{ref35,ref36}; whereas entangled-state technology focuses on quantum resource innovation and topological structure optimization \cite{ref37}.

This paper mainly investigates the judgment criteria for non-maximal quantum access structures and the construction methods of efficient quantum secret sharing schemes for such structures based on entangled states, with a special focus on hyperstar quantum access structures containing three hyperedges. Prior studies have designed efficient perfect quantum secret sharing schemes by combining single photons with classical secret sharing theory. On this basis, this paper aims to propose a general construction method based on entangled states. This category of non-maximal quantum access structures neither belongs to ramp-type quantum access structures nor can be generated by stabilizer codes, so the existing efficient scheme construction methods applicable to the above two types of structures cannot be directly adopted. This paper first defines the representative-element quantum access structure of such structures as a substructure of hyperstar quantum access structures with three hyperedges. We construct an imperfect quantum secret sharing scheme over this substructure using entangled states, and introduce a classical secret sharing scheme corresponding to the three-hyperedge hyperstar quantum access structure as an auxiliary component to further develop a complete perfect quantum secret sharing scheme. Since quantum resources are only lightly deployed for the representative-element access structure, the proposed scheme achieves both high efficiency and practicality.

The method proposed in this paper can provide important theoretical references and guidance for the design and construction of efficient quantum secret sharing schemes under general quantum access structures, and further offer new theoretical support and implementation paths for the practical deployment of secret sharing technologies in asymmetric quantum networks.
\section*{Acknowledgment}
This work was supported by the National Natural Science Foundation of China 12201484.

\end{document}